\newtheorem{thrm}{Theorem}
\newtheorem{corol}{Corollary}
\newtheorem{pro}{Proposition}
\newtheorem{remark}{Remark}
\begin{document}

\title{Jamming-Resistant Receivers for the Massive MIMO Uplink}

\author{Tan Tai Do$^\dagger$, Emil Bj\"ornson$^\dagger$, Erik G. Larsson$^\dagger$, and S. Mohammad Razavizadeh$^*$\\
{\small $^\dagger$ Department of Electrical Engineering, Link\"oping University (LiU), Sweden} \\
{\small $^*$ School of Electrical Engineering, Iran University of Science $\&$ Technology (IUST), Iran}

\thanks{This work will be partly presented at the 42nd IEEE Int. Conf. on Acoustics, Speech and Signal Process.
(ICASSP2017), Mar. 5-9, 2017, New Orleans, USA \cite{DBL17ICASSP}.}
}

\maketitle

\begin{abstract}
We design a jamming-resistant receiver scheme to enhance the robustness of a massive MIMO uplink system against jamming. We assume that a jammer attacks the system both in the pilot and data transmission phases.
The key feature of the proposed scheme is that, in the pilot phase, we estimate not only the legitimate channel, but also the jamming channel by exploiting a purposely unused pilot sequence. The jamming channel estimate is used to constructed  linear receive filters that reject the impact of the jamming signal. The performance of the proposed scheme is analytically evaluated using asymptotic properties of massive MIMO. The optimal regularized zero-forcing receiver and the optimal power allocation are also studied. Numerical results are provided to verify our analysis and show that the proposed scheme greatly improves the achievable rates, as compared to conventional receivers. Interestingly, the proposed scheme works particularly well under strong jamming attacks, since the improved estimate of the jamming channel outweighs the extra jamming power.
\end{abstract}

\begin{IEEEkeywords}
Massive MIMO, jamming attack, receive filter, optimal power allocation.
\end{IEEEkeywords}

\section{Introduction}\label{sec:intro}
As a promising candidate for the emerging 5G wireless communication networks \cite{MLYNb}, massive multiple-input multiple-output (MIMO) has recently received a lot of research attention. However, the physical layer security of massive MIMO is not well studied in literature. A potential reason is the belief that physical layer security techniques in regular MIMO systems, which have been intensively studied, can be straight-forwardly extended to massive MIMO systems. However, as shown in \cite{RZR15CM,BKA15CNS}, the spatial dimensions that massive MIMO exploits bring new challenges and opportunities to the physical layer security, which are fundamentally different from conventional MIMO systems. While massive MIMO is robust against passive eavesdropping \cite{RZR15CM} thanks to the array gain and ability to operate with lower transmit power, active jamming attacks is a big challenge. When a massive MIMO system is attacked by jamming, especially in the pilot phase, the additional pilot contamination caused by the jamming leads to an inability to suppress the jamming and this translates into a significant reduction of the achievable rates \cite{BKA15CNS}.

Although jamming exists and has been identified as a critical problem for reliable communications \cite{WoS02Comp,YZJ16TIFS}, there are only a few works focusing on the jamming aspects in massive MIMO \cite{RZR15CM,BKA15CNS,IJC13GBC,WLW15TWC,PRB16WCL,WSN16TIT}. For instance, optimized jamming is studied for uplink massive MIMO in \cite{PRB16WCL}, which shows that a smart jammer can cause substantial jamming pilot contamination that degrades the sum rate. In order to detect pilot contamination attacks in massive MIMO, several jamming detection techniques are introduced in \cite{RZR15CM,IJC13GBC,PRB16WCL}. Jamming defense mechanisms for massive MIMO are proposed in \cite{BKA15CNS,IJC13GBC}, in which secret keys are employed to encrypt and protect the legitimate signal from jamming attacks. The authors of \cite{WSN16TIT} investigates an artificial noise-aided transmitter for secure communications in the presence of attackers capable of both jamming and eavesdropping.

Pilot contamination appears when the pilot signal, transmitted for estimation of a user channel, is interfered by another transmission \cite{Mar10TWC}. The typical effect is that the base station (BS) cannot use the estimated channel to coherently combine the desired signal, without also coherently combining the interference. Pilot contamination between legitimate users of the system is a big challenge in massive MIMO, but can be substantially suppressed by pilot coordination across cells \cite{Huh2012a,BLD16TWC} or by exploiting second-order channel statistics \cite{Yin2013a,Bjornson2017a}. Jamming pilot contamination is more difficult to deal with, because the jammer refuses to coordinate itself with the system and attempts to create maximum pilot contamination rather than minimum. Since the knowledge of the structure and properties of the jamming attack is limited, a typical approach to deal with jamming signals is to treat them as additive noise and design the transceivers as if there was no jamming \cite{BKA15CNS,PRB16WCL}. However, jamming in massive MIMO is not noise-like since the legitimate channel estimate is correlated with the jamming channel.

In order to enhance the robustness of the massive MIMO uplink against jamming attacks, we consider an anti-jamming scheme based on jamming-resistant receivers, which is briefly introduced in \cite{DBL17ICASSP}. Developed from our initial concept in \cite{DBL17ICASSP}, this paper provides a comprehensive study of jamming-resistant receiver design for the massive MIMO uplink by including various rigorous proofs and new results related to the effect of jamming powers, optimal receive filters, and power allocations.

The key idea of the proposed scheme is to construct the receive filters using not only an estimate of the legitimate channel but also an estimate of the jamming channel. To this end, we exploit a purposely unused pilot sequence, which is orthogonal to the pilot sequences assigned to the legitimate user, to estimate the jamming channel, up to an unknown scaling factor. This estimate is used to design receive filters that reject the jamming signal. We consider regularized zero-forcing (RZF) receive filters, in which the regularization factor can be adjusted to optimize the system performance.

To evaluate the performance of the proposed scheme, the achievable rates are analyzed and closed-form large-scale approximations are obtained. Based on the analytical results, we derive the optimal regularization factor for the RZF receiver. Moreover, we study how a legitimate user should allocate its power between the pilot and data phases. We obtain an asymptotically optimal power allocation for systems with a very large number of antennas and a sub-optimal power allocation for cases with a finite number of antennas. Simulation results are provided, which reveal that the proposed jamming-resistant receivers and power allocation  substantially improve the system performance over conventional schemes.

The rest of paper is organized as follows. Section~\ref{sec:probsetup} presents the problem setup and signal models for the pilot and data transmission phases. Section~\ref{sec:ch_est} considers the channel estimations and jamming-resistant receiver design. In Section~\ref{sec:opt_rx}, closed-form large-scale approximations for the achievable rates of the proposed scheme are provided. The optimal RZF receiver and effects of the jamming powers are also analyzed in this section. Section~\ref{sec:opt_p} studies the optimal power allocation. Numerical results are then provided in Section~\ref{sec:numerical} and the main conclusions are given in Section~\ref{sec:conclusion}.


\section{Problem Setup}\label{sec:probsetup}

We consider a single-user massive MIMO uplink consisting of a BS, a legitimate user and a jammer, as depicted in Fig.~\ref{fig:system_model}. We assume that the BS is equipped with $M$ antennas, while the legitimate user and the jammer have a single antenna each. This basic model captures the main principle of jamming, and the methodology can be generalized to having multiple legitimate users.

\begin{figure}[t]
\centering
\psfrag{BS}[][][0.7]{$\mathrm{Base~station}$}
\psfrag{US}[][][0.7]{$\mathrm{User}$}
\psfrag{JM}[][][0.7]{$\mathrm{Jammer}$}
\psfrag{gu}[][][0.7]{$\mathbf{h}$}
\psfrag{gj}[][][0.7]{$\mathbf{g}$}
\includegraphics[width=7.5cm]{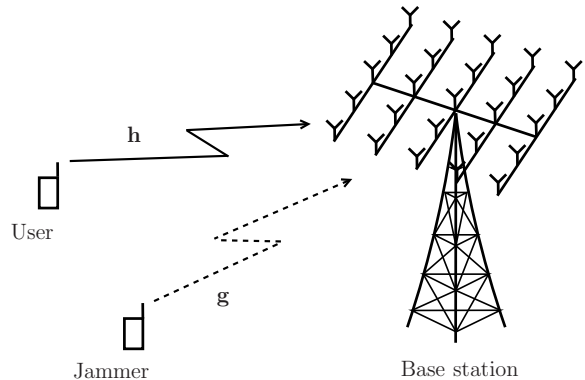}
\caption{Massive MIMO uplink under a jamming attack.}\label{fig:system_model}
\end{figure}

Let us denote $\mathbf{h}\in \mathbb{C}^{M\times 1}$ and $\mathbf{g}\in \mathbb{C}^{M\times 1}$ as the channel vectors from the legitimate user and the jammer to the BS, respectively. We assume that the elements of $\mathbf{h}$ are independent and identically distributed (i.i.d.) zero-mean circularly-symmetric complex Gaussian (ZMCSCG) random variables, i.e., $\mathbf{h}\sim \mathcal{CN}(0,\beta_\mathrm{u}\mathbf{I}_M)$, where the variance $\beta_\mathrm{u}$ represents the large-scale fading. Similarly, we assume that $\mathbf{g}\sim \mathcal{CN}(0,\beta_\mathrm{j}\mathbf{I}_M)$, where the variance $\beta_\mathrm{j}$ represents the large-scale fading. The channels $\mathbf{h}$ and $\mathbf{g}$ are independent.

We consider a block-fading model, in which the channel remains constant during a coherence block of $T$ symbols, and varies independently from one coherence block to the next. The communication between the legitimate user and the BS follows a two-phase transmission protocol. In the first phase (pilot phase), the legitimate user sends a pilot sequence to the BS for channel estimation. In the second phase (data transmission phase), the legitimate user transmits its payload data to the BS. We assume that the jammer attacks the system both in the pilot and data transmission phases.


\subsection{Pilot Phase}
During the first $\tau$ symbols of a coherence block ($\tau<T$), the user transmits a pilot sequence $\mathbf{s}_\mathrm{u}$ of length $\tau$ symbols. This pilot originates from a pilot codebook $\mathcal{S}$ containing $\tau$ orthogonal unit-power vectors. We assume that there is (at least) one pilot sequence that is unused and orthogonal to pilot sequences assigned to the legitimate user.\footnote{This is often the case in real systems, when the system is dimensioned for a maximum simultaneous user load that is substantially larger than the number of active users at most points in time.} We further assume that the jammer is aware of the transmission protocol but the legitimate system uses a pilot hopping scheme such that the jammer cannot know the users' current pilot sequences. Therefore, the jammer randomly chooses a jamming sequence $\mathbf{s}_{\mathrm{j}}$ uniformly distributed over the unit sphere. By sending the jamming sequence $\mathbf{s}_{\mathrm{j}}\in \mathbb{C}^{\tau \times 1}$, which satisfies $\|\mathbf{s}_{\mathrm{j}}\|^2=1$, the jammer hopes to interfere with the channel estimation.

Accordingly, the received signal at the $M$ antennas of the BS in the $\tau$ symbol times of the pilot phase can be stacked as
\begin{align}
\label{ytma}
\mathbf{Y}_\mathrm{t}=\sqrt{\tau p_\mathrm{u}}\mathbf{h}\mathbf{s}_\mathrm{u}^T+\sqrt{\tau p_\mathrm{j}}\mathbf{g}\mathbf{s}_\mathrm{j}^T+\mathbf{N}_\mathrm{t},
\end{align}
where $\mathbf{Y}_\mathrm{t}\in \mathbb{C}^{M \times \tau}$, $p_\mathrm{u}$ and $p_\mathrm{j}$ are the transmit powers per symbol of the user and jammer during the pilot phase, respectively. The additive noise matrix $\mathbf{N}_\mathrm{t} \in \mathbb{C}^{M \times \tau}$ is assumed to have i.i.d.~ZMCSCG elements, i.e., $\mathrm{vec}(\mathbf{N}_\mathrm{t})\sim \mathcal{CN}(0,\sigma^2\mathbf{I}_{M^2\tau^2})$, where $\sigma^2$ is the noise variance and $\mathrm{vec}(\mathbf{N}_\mathrm{t})$ is the vectorization of $\mathbf{N}_\mathrm{t}$.

\subsection{Data Transmission Phase}
During the last $(T-\tau)$ symbols of a coherence block, the user transmits payload data to the BS and the jammer continues to interfere by sending a jamming signal. Let us denote as $x_\mathrm{u}$ and $x_\mathrm{j}$ the transmitted signals
from the user and the jammer, respectively. These signals satisfy $\mathbb{E}\{|x_\mathrm{u}|^2\}=1$ and $\mathbb{E}\{|x_\mathrm{j}|^2\}=1$.
The received signal at the BS is
\begin{align}
\label{yd}
\mathbf{y}_\mathrm{d}=\sqrt{q_\mathrm{u}}\mathbf{h}x_\mathrm{u}+\sqrt{q_\mathrm{j}}\mathbf{g}x_\mathrm{j}+\mathbf{n}_\mathrm{d},
\end{align}
where the additive noise vector $\mathbf{n}_\mathrm{d}$ is assumed to have i.i.d.~$\mathcal{CN}(0,\sigma^2)$ elements, $q_\mathrm{u}$ and $q_\mathrm{j}$ are the transmit powers from the user and jammer in the data transmission phase, respectively.

To detect $x_\mathrm{u}$ based on $\mathbf{y}_\mathrm{d}$, the BS uses a linear receive filter as follows:
\begin{align}
\label{y}
y=\mathbf{a}^H\mathbf{y}_\mathrm{d}=\sqrt{q_\mathrm{u}}\mathbf{a}^H\mathbf{h}x_\mathrm{u}+
\sqrt{q_\mathrm{j}}\mathbf{a}^H\mathbf{g}x_\mathrm{j}+\mathbf{a}^H\mathbf{n}_\mathrm{d},
\end{align}
where $\mathbf{a}\in \mathbb{C}^{M \times 1}$ is the receive filter, which will be carefully selected in the next section to reject the jamming.
The received signal in (\ref{y}) can be rewritten as
\begin{align}
\label{y2}
y=\sqrt{q_\mathrm{u}}\mathbb{E}\{\mathbf{a}^H\mathbf{h} |\mathbf{s}_\mathrm{j}\}x_\mathrm{u}+\sqrt{q_\mathrm{u}}(\mathbf{a}^H\mathbf{h}-
\mathbb{E}\{\mathbf{a}^H\mathbf{h} |\mathbf{s}_\mathrm{j} \})x_\mathrm{u}+\sqrt{q_\mathrm{j}}\mathbf{a}^H\mathbf{g}x_\mathrm{j}+\mathbf{a}^H\mathbf{n}_\mathrm{d}.
\end{align}
By treating $\sqrt{q_\mathrm{u}}\mathbb{E}\{\mathbf{a}^H\mathbf{h} |\mathbf{s}_\mathrm{j} \}$ as the deterministic channel that the desired signal is received over and treating the last three terms (which are uncorrelated with $x_\mathrm{u}$) as worst-case independent Gaussian noise, an achievable rate for the legitimate user in the massive MIMO uplink is
\begin{align}
\label{R}
R= \left(1-\frac{\tau}{T}\right) \mathbb{E}_{\mathbf{s}_\mathrm{j}}\left\{\log_2\left(1+\rho\right)\right\},
\end{align}
where the pre-log factor $\left(1-\frac{\tau}{T}\right)$ accounts for the channel estimation overhead and $\rho$ is the effective signal-to-interference-and-noise ratio (SINR), which is given by
\begin{align}
\label{rho}
\rho=\frac{q_\mathrm{u}|\mathbb{E}\{\mathbf{a}^H\mathbf{h}|\mathbf{s}_\mathrm{j}\}|^2}{q_\mathrm{u}\texttt{var}\{\mathbf{a}^H\mathbf{h}|\mathbf{s}_\mathrm{j}\}
+q_\mathrm{j}\mathbb{E}\{|\mathbf{a}^H\mathbf{g}|^2|\mathbf{s}_\mathrm{j}\}+\sigma^2\mathbb{E}\{\|\mathbf{a}\|^2|\mathbf{s}_\mathrm{j}\}}
\end{align}
and $\texttt{var}\{\mathbf{a}^H\mathbf{h}|\mathbf{s}_\mathrm{j}\}=\mathbb{E}\{|\mathbf{a}^H\mathbf{h}|^2|\mathbf{s}_\mathrm{j}\}-|\mathbb{E}\{\mathbf{a}^H\mathbf{h}|\mathbf{s}_\mathrm{j}\}|^2$. In (\ref{rho}), the numerator ($q_\mathrm{u}|\mathbb{E}\{\mathbf{a}^H\mathbf{h}|\mathbf{s}_\mathrm{j}\}|^2$) represents the effective desired signal power. The first term ($q_\mathrm{u}\texttt{var}\{\mathbf{a}^H\mathbf{h}|\mathbf{s}_\mathrm{j}\}$), second term ($q_\mathrm{j}\mathbb{E}\{|\mathbf{a}^H\mathbf{g}|^2|\mathbf{s}_\mathrm{j}\}$), and third term ($\sigma^2\mathbb{E}\{\|\mathbf{a}\|^2|\mathbf{s}_\mathrm{j}\}$) in the denominator of (\ref{rho}) correspond to the undesired signals' powers resulted from the channel uncertainty, jamming, and additive noise, respectively. Typically, due to the channel hardening in massive MIMO, the power terms associated with the channel uncertainty and additive noise are negligible compared to the desired signal term and the jamming term.\footnote{We will later show that the power terms associated with the channel uncertainty and additive noise are proportional to the number of antennas $M$, whereas the desired signal term and the jamming term are proportional to $M^2$ when $M$ is large.} In order to improve the system performance, one can focus on selecting the receive filter $\mathbf{a}$ such that it amplifies the desired signal ($q_\mathrm{u}|\mathbb{E}\{\mathbf{a}^H\mathbf{h}|\mathbf{s}_\mathrm{j}\}|^2$ is large), while mitigating the jamming signal ($q_\mathrm{j}\mathbb{E}\{|\mathbf{a}^H\mathbf{g}|^2|\mathbf{s}_\mathrm{j}\}$ is as small as possible).

\begin{remark}
We note that the expectations in (\ref{rho}) are with respect to (w.r.t.) $\mathbf{h}$, $\mathbf{g}$, $\mathbf{N}_\mathrm{t}$, $x_\mathrm{u}$, and $x_\mathrm{j}$. The effective SINR $\rho$ in (\ref{rho}) is conditioned on $\mathbf{s}_\mathrm{j}$. In order to realize the achievable rate in (\ref{R}), the BS needs to know the numerator and the denominator of $\rho$. Although $\mathbf{s}_\mathrm{j}$ is assumed to be unknown by the system, we will later show that $\rho$ only depends on the correlations of $\mathbf{s}_\mathrm{j}$ and the legitimate pilot sequences, which can be estimated with high accuracy thanks to the asymptotic properties of massive MIMO. We also stress that the receiver processing proposed in the next section will not exploit any instantaneous knowledge of $\mathbf{s}_\mathrm{j}$.
\end{remark}


\section{Channel estimation and jamming-resistant receiver design}\label{sec:ch_est}

The achievable rate in (\ref{R}) highly depends on the choice of the receive filter $\mathbf{a}$. To harvest an array gain, it should be selected as a function of the received pilot signal in the pilot phase. In this section, we propose a jamming-resistant receive filter, which is constructed based on not only the estimate of the legitimate channel but also on an estimate of the jamming channel.

\subsection{Channel Estimation}
In order to estimate the legitimate channel $\mathbf{h}$, the received pilot signal $\mathbf{Y}_\mathrm{t}$ is first correlated with the user's pilot sequence $\mathbf{s}_\mathrm{u}$ as
\begin{align}
\label{yth}
\mathbf{y}_\mathrm{u}=\mathbf{Y}_\mathrm{t}\mathbf{s}_\mathrm{u}^*=\sqrt{\tau p_\mathrm{u}}\mathbf{h}+\sqrt{\tau p_\mathrm{j}}\mathbf{s}_\mathrm{j}^T\mathbf{s}_\mathrm{u}^*\mathbf{g}+ \mathbf{N}_\mathrm{t}\mathbf{s}_\mathrm{u}^*.
\end{align}
Since the BS does not know $\mathbf{s}_\mathrm{j}^T\mathbf{s}_\mathrm{u}^*$, but only its distribution (as explained below), the linear MMSE estimate of $\mathbf{h}$ given $\mathbf{y}_\mathrm{t}$ is \cite{Kay93}
\begin{align}
\label{he}
\mathbf{\widehat{h}}=\eta_\mathrm{u}\mathbf{y}_\mathrm{t}\triangleq\alpha_1 \mathbf{h}+\alpha_2 \mathbf{g}+ \mathbf{n}_1,
\end{align}
where $\eta_\mathrm{u}=\frac{\sqrt{\tau p_\mathrm{u}}\beta_\mathrm{u}}{\tau p_\mathrm{u} \beta_\mathrm{u}+p_\mathrm{j}\beta_\mathrm{j}+1}$, $\alpha_1=\eta_\mathrm{u}\sqrt{\tau p_\mathrm{u}}$, $\alpha_2=\eta_\mathrm{u}\sqrt{\tau p_\mathrm{j}}\mathbf{s}_\mathrm{j}^T\mathbf{s}_\mathrm{u}^*$, and $\mathbf{n}_1 \sim \mathcal{CN}(0,\eta_\mathrm{u}^2\sigma^2\mathbf{I}_M)$. In order to perform the estimation in (\ref{he}), the BS needs to know $p_\mathrm{j}\beta_\mathrm{j}$. Since this term contains the large-scale fading $\beta_\mathrm{j}$, it changes very slowly with time (e.g., some 40 times slower than the small-scale fading according to \cite{Rap96WCP,AML14TIT}). Therefore, the BS can estimate $p_\mathrm{j}\beta_\mathrm{j}$ in advance, e.g., the BS can let the user be silent at some random occasions (unknown to the jammer) to measure the corresponding power level or use blind estimation techniques \cite{AML14TIT}.

As we can see from (\ref{he}), the legitimate channel estimate $\mathbf{\widehat{h}}$ is correlated with the jamming channel $\mathbf{g}$. Without the knowledge of the jamming channel $\mathbf{g}$, the receive filter $\mathbf{a}$ is generally chosen as a linear function of $\mathbf{\widehat{h}}$. In the absence of jamming, the optimal receive filter is
\begin{equation} \label{MRC-definition}
\mathbf{a}_{\mathrm{MRC}} = \mathbf{\widehat{h}},
\end{equation}
which is known as maximal ratio combining (MRC). If such a receive filter is used heuristically in the presence of jamming, the correlation with the jamming channel (in the sense that $\mathbb{E}\{ \mathbf{g}^H \mathbf{\widehat{h}}|\mathbf{s}_\mathrm{j}\} = M\alpha_2\beta_\mathrm{j}$) results in an amplification of the jamming signal. This leads to a degradation of the system performance \cite{BKA15CNS}, known as jamming pilot contamination. In order to mitigate this effect, we propose to design the receive filter based on both $\mathbf{h}$ and $\mathbf{g}$. However, since $\mathbf{h}$ and $\mathbf{g}$ are not available at the BS, we construct receive filters using their estimates instead.

Recall that there is (at least) one unused pilot sequence, preserved in the system, which is orthogonal to the user's pilot $\mathbf{s}_\mathrm{u}$. By projecting the received pilot signal $\mathbf{Y}_\mathrm{t}$ onto this unused pilot sequence, the user's pilot signal is eliminated, leaving only the jamming signal (and noise). The resulting signal is
\begin{align}
\label{ytg}
\mathbf{y}_\mathrm{j}=\mathbf{Y}_\mathrm{t}\mathbf{s}_\mathrm{\overline{u}}^*=\sqrt{\tau p_\mathrm{j}}\mathbf{s}_\mathrm{j}^T\mathbf{s}_\mathrm{\overline{u}}^*\mathbf{g}+ \mathbf{N}_\mathrm{t}\mathbf{s}_\mathrm{\overline{u}}^*,
\end{align}
where $\mathbf{s}_\mathrm{\overline{u}}$ is the unused pilot sequence, satisfying $\mathbf{s}_\mathrm{u}^T\mathbf{s}_\mathrm{\overline{u}}^*=0$. An estimate of the jamming channel $\mathbf{g}$ can be obtained as
\begin{align}
\label{ge}
\mathbf{\widehat{g}}=\eta_{\mathrm{j}}\mathbf{y}_\mathrm{j}\triangleq \alpha_3 \mathbf{g}+ \mathbf{n}_2,
\end{align}
where $\eta_{\mathrm{j}}=\frac{1}{{\sqrt{p_\mathrm{j}\beta_\mathrm{j}+\sigma^2}}}$, $\alpha_3=\eta_{\mathrm{j}}\sqrt{\tau p_\mathrm{j}}\mathbf{s}_\mathrm{j}^T\mathbf{s}_\mathrm{\overline{u}}^*$, and $\mathbf{n}_2 \sim \mathcal{CN}(0,\eta_{\mathrm{j}}^2\sigma^2\mathbf{I}_M)$. We note that the quality of the jamming channel estimate $\mathbf{\widehat{g}}$ depends on the value of $|\mathbf{s}_\mathrm{j}^T\mathbf{s}_\mathrm{\overline{u}}^*|$. It can happen that $|\mathbf{s}_\mathrm{j}^T\mathbf{s}_\mathrm{\overline{u}}^*|=0$, but the probability is zero since $\mathbf{s}_\mathrm{j}$ is a continuous random vector, which is uniformly distributed over the unit sphere. Moreover, when there are more than one unused pilot sequence, one can improve the quality of the jamming channel estimate $\mathbf{\widehat{g}}$ by selecting the unused pilot sequence, which maximizes $|\mathbf{s}_\mathrm{j}^T\mathbf{s}_\mathrm{\overline{u}}^*|$ or combining all of them.
In this paper, we pick one unused pilot at random, to focus on the basic behaviors, and leave potential improvements for future work.

\subsection{Jamming-Resistant Receiver}

Based on the estimates $\mathbf{\widehat{h}}$ and $\mathbf{\widehat{g}}$, we now construct a jamming-resistant receiver, which is inspired by the RZF receiver \cite{PHS05TCOM}. Accordingly, we propose the following RZF receiver\footnote{Conventionally, the RZF receiver is constructed based on the exact channels $\mathbf{h}$ and $\mathbf{g}$, in which case it becomes $( \mathbf{h}\mathbf{h}^H + \mathbf{g}\mathbf{g}^H+\mu\mathbf{I}_M)^{-1} [\mathbf{h}  \, \,\mathbf{g}]$. Since we are not interested in the jamming signal, we only need the first column: $( \mathbf{h}\mathbf{h}^H + \mathbf{g}\mathbf{g}^H+\mu\mathbf{I}_M)^{-1} \mathbf{h}$ which is proportional to $( \mathbf{g}\mathbf{g}^H+\mu\mathbf{I}_M)^{-1} \mathbf{h}$ due to the matrix inversion lemma \cite{HBD13JSAC}. From this expression, with a slightly abuse of definition, we call the receive filter in (\ref{arzf}) an RZF receiver under imperfect channel knowledge.}
\begin{align}
\label{arzf}
\mathbf{a}=\mathbf{a}_{\mathrm{RZF}}=\left(\mathbf{\widehat{g}}\mathbf{\widehat{g}}^H+\mu\mathbf{I}_M\right)^{-1}\mathbf{\widehat{h}},
\end{align}
where $\mu\geq0$ is the regularization factor, determining the amount of interference power remaining at the receiver \cite{PHS05TCOM}. By adjusting $\mu$, one can balance between the targets of amplifying the desired signal and mitigating the undesired jamming signal to optimize the overall system performance. In the following, we consider two common examples of the RZF receiver.

\subsubsection{MMSE-type receiver}\label{subsec:mmse}
First, we consider the MMSE receive filter, which is optimal when the receiver has perfect channel state information. Let us rewrite the received signal in (\ref{yd}) as
\begin{align}
\nonumber
\mathbf{y}_\mathrm{d}=\sqrt{q_\mathrm{u}}\mathbf{\widehat{h}}x_\mathrm{u}+\sqrt{q_\mathrm{j}}\mathbf{\widehat{g}}x_\mathrm{j}+
\sqrt{q_\mathrm{u}}\mathbf{e}_\mathrm{u}x_\mathrm{u}+\sqrt{q_\mathrm{j}}\mathbf{e}_\mathrm{j}x_\mathrm{j}+ \mathbf{n}_\mathrm{d},
\end{align}
where $\mathbf{e}_\mathrm{u}\triangleq \mathbf{h}-\mathbf{\widehat{h}}$ and $\mathbf{e}_\mathrm{j}\triangleq \mathbf{g}-\mathbf{\widehat{g}}$ are the desired and jamming channel estimation errors, respectively. By treating $\mathbf{w}\triangleq\sqrt{q_\mathrm{u}}\mathbf{e}_\mathrm{u}x_\mathrm{u}+\sqrt{q_\mathrm{j}}\mathbf{e}_\mathrm{j}x_\mathrm{j}+ \mathbf{n}_\mathrm{d}$ as equivalent uncorrelated additive Gaussian noise, an MMSE-type receive filter can be obtained as
\begin{align}
\label{ammse}
\mathbf{a}_{\mathrm{MMSE}}=\left(\mathbf{\widehat{g}}\mathbf{\widehat{g}}^H+\frac{1}{q_\mathrm{j}} \boldsymbol{\Psi}\right)^{-1}\mathbf{\widehat{h}},
\end{align}
where $\boldsymbol{\Psi}$ is the covariance matrix of the signal associated with estimation errors plus noise, i.e.,
\begin{align}
\label{psi}
\boldsymbol{\Psi}=\mathbb{E}\{\mathbf{w}\mathbf{w}^H\}=\sigma_\mathrm{e}^2\mathbf{I}_M.
\end{align}
Since $\mathbf{s}_\mathrm{j}$ is unknown, the expectation in (\ref{psi}) is over all random variables including $\mathbf{s}_\mathrm{j}$. Therefore, the equivalent noise variance  $\sigma_\mathrm{e}^2$ is given by
\begin{align}
\nonumber
\sigma_\mathrm{e}^2=q_\mathrm{u}\beta_\mathrm{u}(1-\eta_\mathrm{u}\sqrt{\tau p_\mathrm{u}})+q_\mathrm{j}(\beta_\mathrm{j}(1+\eta_\mathrm{j}^2p_\mathrm{j})+\eta_\mathrm{j}^2\sigma^2)+\sigma^2.
\end{align}
The MMSE-type receive filter in (\ref{ammse}) corresponds to an RZF receive filter with $\mu=\sigma_\mathrm{e}^2/q_\mathrm{j}$. Note that our setup includes the jamming pilot contamination, which makes the equivalent noise $\mathbf{w}$ correlated with the estimated channels $\mathbf{\widehat{h}}$ and $\mathbf{\widehat{g}}$. The receive filter in (\ref{ammse}) is thus not the true MMSE filter, i.e., $\mathbf{a}_{\mathrm{MMSE}}$ may not be optimal in the conventional sense. That is why we call $\mathbf{a}_{\mathrm{MMSE}}$  an ``MMSE-type'' receive filter.

\subsubsection{ZF-type receiver}\label{subsec:zf}
Motivated by the fact that the jamming signal is a main source of limitation in massive MIMO, we consider a receiver that focuses on nulling the jamming signal, i.e., a ZF-type receiver. We show that $\mathbf{a}_\mathrm{RZF}$ corresponds to a ZF-type receiver when $\mu \to 0$.

Following the matrix inversion lemma \cite[Lemma 2]{HBD13JSAC}, the RZF receiver in \eqref{arzf} can be expressed as
\begin{align}
\label{rzf2}
\mathbf{a}_\mathrm{RZF}=\frac{1}{\mu}\left(\mathbf{I}_M-\frac{\mathbf{\widehat{g}}\mathbf{\widehat{g}}^H}{\mu+\|\mathbf{\widehat{g}}\|^2}\right)
\mathbf{\widehat{h}}.
\end{align}
Since a deterministic scalar factor does not change the performance of a linear receive filter (it appears in all terms in the SINR), the RZF receiver in (\ref{rzf2}) is equivalent to
\begin{align}
\label{rzf3}
\mathbf{\tilde{a}}_\mathrm{RZF}=\left(\mathbf{I}_M-\frac{\mathbf{\widehat{g}}\mathbf{\widehat{g}}^H}
{\mu+\|\mathbf{\widehat{g}}\|^2}\right)\mathbf{\widehat{h}}.
\end{align}
Therefore, when $\mu \to 0$ the RZF receiver $\mathbf{a}_\mathrm{RZF}$ becomes a ZF-type receiver $\mathbf{a}_{\mathrm{ZF}}$, which can be expressed as
\begin{align}
\label{azf}
\mathbf{a}_{\mathrm{ZF}}=\left(\mathbf{I}_M-\frac{\mathbf{\widehat{g}}\mathbf{\widehat{g}}^H}{\|\mathbf{\widehat{g}}\|^2}\right)
\mathbf{\widehat{h}}.
\end{align}
Note that we have used the projection-matrix expression for ZF, which is equivalent to using pseudo-inverses.
Due to the imperfect channel estimation, the linear receiver in (\ref{azf}) is not an exact ZF receiver since $\mathbf{\widehat{g}}^H\mathbf{a}_{\mathrm{ZF}}=0$ but $\mathbf{g}^H\mathbf{a}_{\mathrm{ZF}}$ is generally non-zero. Thus, we call the linear receiver in (\ref{azf}) a ``ZF-type'' receiver.


\section{Performance Analysis}\label{sec:opt_rx}

In this section, we analyze the performance of the different proposed receivers and derive the optimal regularization factor for the RZF receiver. The performance of the proposed receivers are also analyzed for systems with extremely strong jammers.

\subsection{Large-scale Approximations}\label{subsec:la}

First of all, let us analyze the effective SINR, which is achieved by the RZF receiver. By exploiting the asymptotic properties of massive MIMO, we can obtain a closed-form \emph{large-scale approximation of the effective SINR} $\rho$, i.e., an approximation of the effective SINR  which almost surely (a.s.) converges to its true value when the number of antennas $M$ tends to infinity. To this end, we use the following notation: for two sequences $f_1[M]$ and $f_2[M]$, we write $f_1[M]\asymp f_2[M]$ to denote that $f_1[M]-f_2[M]\xrightarrow[M\to\infty]{a.s.}0$, where ``$\xrightarrow[M\to\infty]{a.s.}0$'' denotes a.s. convergence. We now obtain the following large-scale approximation.
\begin{thrm}\label{Thrm:rho_rzf}
Assume that the RZF receiver $\mathbf{a}=\mathbf{a}_{\mathrm{MMSE}}$ is used for a fixed $\mu$, then a large-scale approximation of the effective SINR in (\ref{rho}) can be obtained as $\rho \asymp \rho_{\mathrm{RZF}} $, where  $\rho_{\mathrm{RZF}} $ is given by
\begin{align}
\label{rho_rzf}
\rho_{\mathrm{RZF}} = \frac{M\tau p_\mathrm{u}q_\mathrm{u}\beta_\mathrm{u}^2}{\tau p_\mathrm{u}q_\mathrm{u}\beta_\mathrm{u}^2+M\tau p_\mathrm{j}q_\mathrm{j}\delta_1\beta_\mathrm{j}^2\left(\frac{\mu/M+\eta_\mathrm{j}^2\sigma^2}{\mu/M+\eta_\mathrm{j}^2\gamma_\mathrm{j}}\right)^2+\sigma^2(\tau p_\mathrm{u}\beta_\mathrm{u}+\sigma^2
+\tau p_\mathrm{j}\delta_1\beta_\mathrm{j}\frac{\tau p_\mathrm{j}\delta_2 \eta_\mathrm{j}^4\sigma^2\beta_\mathrm{j}+(\mu/M+\eta_\mathrm{j}^2\sigma^2)^2 }{(\mu/M+\eta_\mathrm{j}^2\gamma_\mathrm{j})^2})},
\end{align}
$\delta_1=|\mathbf{s}_\mathrm{j}^T\mathbf{s}_\mathrm{u}^*|^2$, $\delta_2=|\mathbf{s}_\mathrm{j}^T\mathbf{s}_\mathrm{\overline{u}}^*|^2$, and $\gamma_\mathrm{j}=\tau p_\mathrm{j}\delta_2\beta_\mathrm{j}+\sigma^2$.
\end{thrm}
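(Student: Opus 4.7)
The plan is to work with the projection form of the RZF receiver derived in (\ref{rzf3}), namely $\tilde{\mathbf{a}}_{\mathrm{RZF}}=(\mathbf{I}_M-\widehat{\mathbf{g}}\widehat{\mathbf{g}}^H/(\mu+\|\widehat{\mathbf{g}}\|^2))\widehat{\mathbf{h}}$, because it reduces every ingredient of the SINR in (\ref{rho}) to a rational function of a handful of bilinear forms. Explicitly,
\begin{align*}
\mathbf{a}^H\mathbf{h}&=\widehat{\mathbf{h}}^H\mathbf{h}-\tfrac{(\widehat{\mathbf{h}}^H\widehat{\mathbf{g}})(\widehat{\mathbf{g}}^H\mathbf{h})}{\mu+\|\widehat{\mathbf{g}}\|^2}, \qquad
\mathbf{a}^H\mathbf{g}=\widehat{\mathbf{h}}^H\mathbf{g}-\tfrac{(\widehat{\mathbf{h}}^H\widehat{\mathbf{g}})(\widehat{\mathbf{g}}^H\mathbf{g})}{\mu+\|\widehat{\mathbf{g}}\|^2},\\
\|\mathbf{a}\|^2&=\|\widehat{\mathbf{h}}\|^2-\tfrac{|\widehat{\mathbf{h}}^H\widehat{\mathbf{g}}|^2(2\mu+\|\widehat{\mathbf{g}}\|^2)}{(\mu+\|\widehat{\mathbf{g}}\|^2)^2}.
\end{align*}
Substituting $\widehat{\mathbf{h}}=\alpha_1\mathbf{h}+\alpha_2\mathbf{g}+\mathbf{n}_1$ and $\widehat{\mathbf{g}}=\alpha_3\mathbf{g}+\mathbf{n}_2$ expands each expression into inner products of the four mutually independent Gaussian vectors $\mathbf{h},\mathbf{g},\mathbf{n}_1,\mathbf{n}_2$, conditionally on $\mathbf{s}_\mathrm{j}$.

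The asymptotic step rests on the classical trace lemma: for independent $\mathbf{x},\mathbf{y}\sim\mathcal{CN}(\mathbf{0},c\mathbf{I}_M)$, $\mathbf{x}^H\mathbf{y}/M\asymp 0$ and $\|\mathbf{x}\|^2/M\asymp c$ almost surely. Applied pairwise, this delivers the deterministic equivalents
\begin{align*}
\|\widehat{\mathbf{g}}\|^2/M&\asymp\eta_\mathrm{j}^2\gamma_\mathrm{j}, & \widehat{\mathbf{h}}^H\widehat{\mathbf{g}}/M&\asymp\alpha_2^\ast\alpha_3\beta_\mathrm{j}, & \widehat{\mathbf{g}}^H\mathbf{g}/M&\asymp\alpha_3^\ast\beta_\mathrm{j},\\
\widehat{\mathbf{h}}^H\mathbf{h}/M&\asymp\alpha_1^\ast\beta_\mathrm{u}, & \widehat{\mathbf{h}}^H\mathbf{g}/M&\asymp\alpha_2^\ast\beta_\mathrm{j}, & \widehat{\mathbf{g}}^H\mathbf{h}/M&\asymp 0.
\end{align*}
Substituting these into the three expansions above, and using the identity $|\alpha_3|^2\beta_\mathrm{j}=\eta_\mathrm{j}^2(\gamma_\mathrm{j}-\sigma^2)$, yields the two leading-order limits $\mathbf{a}^H\mathbf{h}/M\asymp\alpha_1^\ast\beta_\mathrm{u}$ and $\mathbf{a}^H\mathbf{g}/M\asymp\alpha_2^\ast\beta_\mathrm{j}(\mu/M+\eta_\mathrm{j}^2\sigma^2)/(\mu/M+\eta_\mathrm{j}^2\gamma_\mathrm{j})$, which already display the characteristic ratio appearing inside $\rho_{\mathrm{RZF}}$.

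Assembling the SINR, the signal power $q_\mathrm{u}|\mathbb{E}\{\mathbf{a}^H\mathbf{h}|\mathbf{s}_\mathrm{j}\}|^2$ and the jamming term $q_\mathrm{j}\mathbb{E}\{|\mathbf{a}^H\mathbf{g}|^2|\mathbf{s}_\mathrm{j}\}$ are both of order $M^2$ with coefficients obtained simply by squaring the two limits above; factoring $M$ out of the numerator and denominator of (\ref{rho}) and substituting $\eta_\mathrm{u}=\sqrt{\tau p_\mathrm{u}}\beta_\mathrm{u}/(\tau p_\mathrm{u}\beta_\mathrm{u}+p_\mathrm{j}\beta_\mathrm{j}+1)$ reproduces the leading $M\tau p_\mathrm{u}q_\mathrm{u}\beta_\mathrm{u}^2$ in the numerator as well as the jamming-ratio term in the denominator of $\rho_{\mathrm{RZF}}$. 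The hard part will be the two remaining $O(M)$ pieces, $q_\mathrm{u}\texttt{var}\{\mathbf{a}^H\mathbf{h}|\mathbf{s}_\mathrm{j}\}$ and $\sigma^2\mathbb{E}\{\|\mathbf{a}\|^2|\mathbf{s}_\mathrm{j}\}$: each is negligible against the $M^2$ signal power but leaves a non-trivial $O(1)$ contribution to the $M$-normalized denominator and therefore must be evaluated at precisely the right order. For $\mathbb{E}\{\|\mathbf{a}\|^2|\mathbf{s}_\mathrm{j}\}/M$ the $\|\mathbf{a}\|^2$ expansion combined with the table above gives the limit directly, including the projection correction that collapses the fraction $(\mu/M+\eta_\mathrm{j}^2\sigma^2)^2/(\mu/M+\eta_\mathrm{j}^2\gamma_\mathrm{j})^2$ in (\ref{rho_rzf}). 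For the variance I would decompose $\mathbf{a}^H\mathbf{h}$ into its deterministic equivalent plus $O(\sqrt{M})$ fluctuations, evaluate the resulting fourth-order Gaussian moments via the Isserlis/Wick rule, and apply a trace-type concentration argument to control the cross terms generated by the rank-one projection $\widehat{\mathbf{g}}\widehat{\mathbf{g}}^H/(\mu+\|\widehat{\mathbf{g}}\|^2)$. Combining the four contributions and invoking continuity of the rational map defining $\rho$ finally delivers $\rho\asymp\rho_{\mathrm{RZF}}$.
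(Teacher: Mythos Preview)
Your proposal is correct and follows essentially the same route as the paper: apply the matrix inversion lemma to reduce the RZF filter to the rank-one projection form, expand each bilinear form in the independent Gaussian vectors $\mathbf{h},\mathbf{g},\mathbf{n}_1,\mathbf{n}_2$, and invoke the law of large numbers to obtain the deterministic equivalents you tabulate. The one place where you and the paper diverge is the variance term $q_\mathrm{u}\,\texttt{var}\{\mathbf{a}^H\mathbf{h}\,|\,\mathbf{s}_\mathrm{j}\}$: rather than computing fourth-order moments via Isserlis/Wick and controlling projection cross terms by concentration, the paper simply observes that the cross pieces $\widehat{\mathbf{h}}^H\widehat{\mathbf{g}}\cdot\widehat{\mathbf{g}}^H\mathbf{h}/M^2$ and the non-$\alpha_1$ parts of $\widehat{\mathbf{h}}^H\mathbf{h}/M$ vanish, so that $\mathbf{a}^H\mathbf{h}/M\asymp \alpha_1\|\mathbf{h}\|^2/M$ (in your normalization), and then reads off $\texttt{var}\{\|\mathbf{h}\|^2\}=M\beta_\mathrm{u}^2$ directly. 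Your route is more laborious but arguably more careful about the order-$M$ scale; the paper's shortcut is quicker but tacitly assumes that a.s.\ convergence of the difference at order $1$ transfers to agreement of variances at order $1/M$, which strictly speaking requires an $L^2$ argument of exactly the kind you sketch.
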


\begin{proof}

When the RZF receiver is used, the power terms of the effective SINR in (\ref{rho}) can be calculated as follows.

{\Large$\cdot$} \emph{The desired signal term $q_\mathrm{u}|\mathbb{E}\{\mathbf{a}^H\mathbf{h}|\mathbf{s}_\mathrm{j}\}|^2$}

Let us consider
\begin{align}
\nonumber
\frac{\mathbf{a}^H\mathbf{h}}{M}&=\frac{\mathbf{\widehat{h}}^H}{M}\left(\mathbf{\widehat{g}}\mathbf{\widehat{g}}^H+
\mu\mathbf{I}_M\right)^{-1}\mathbf{h}\\
\nonumber
&=\frac{\mathbf{\widehat{h}}^H}{M}\frac{1}{\mu}\left(\mathbf{I}_M-
\frac{\mathbf{\widehat{g}}\mathbf{\widehat{g}}^H}{\mu+\|\mathbf{\widehat{g}}\|^2}\right)\mathbf{h}\\
\label{ah}
&=\frac{1}{\mu}\frac{\mathbf{\widehat{h}}^H\mathbf{h}}{M}-\mu
\frac{\frac{\mathbf{\widehat{h}}^H\mathbf{\widehat{g}}}{M}\frac{\mathbf{\widehat{g}}^H\mathbf{h}}{M}}{\frac{\mu}{M}+\frac{\|\mathbf{\widehat{g}}\|^2}{M}},
\end{align}
where the second equality follows from the matrix inversion lemma. Since $\mathbf{h}$, $\mathbf{g}$, and ($\mathbf{n}_1$, $\mathbf{n}_2$) are independent, due to the law of large number, we have the following large-scale approximations
\begin{align}
\label{whh}
\frac{\mathbf{\widehat{h}}^H\mathbf{h}}{M}&=\frac{(\alpha_1 \mathbf{h}+\alpha_2 \mathbf{g}+\mathbf{n}_1)^H\mathbf{h}}{M}\asymp\alpha_1\beta_{\mathrm{u}},\\
\label{whwg}
\frac{\mathbf{\widehat{h}}^H\mathbf{\widehat{g}}}{M}&=\frac{(\alpha_1 \mathbf{h}+\alpha_2 \mathbf{g}+\mathbf{n}_1)^H(\alpha_3\mathbf{g}+\mathbf{n}_2)}{M}\asymp\alpha_2^*\alpha_3\beta_{\mathrm{j}},\\
\label{wgh}
\frac{\mathbf{\widehat{g}}^H\mathbf{h}}{M}&=\frac{(\alpha_3\mathbf{g}+\mathbf{n}_2)^H\mathbf{h}}{M}\asymp0,\\
\label{wgwg}
\frac{\|\mathbf{\widehat{g}}\|^2}{M}&\asymp \eta_{\mathrm{j}}\gamma_{\mathrm{j}}, \\
\frac{\mu}{M} &\asymp 0.
\end{align}
Thus, it follows that $\frac{\mathbf{a}^H\mathbf{h}}{M}\asymp\frac{1}{\mu}\alpha_1\beta_{\mathrm{u}}$ and since a.s. convergence implies convergence in mean we further have
\begin{align}
\label{t1}
\frac{q_\mathrm{u}|\mathbb{E}\{\mathbf{a}^H\mathbf{h}|\mathbf{s}_\mathrm{j}\}|^2}{M^2}\asymp \frac{q_\mathrm{u}}{\mu^2}\alpha_1^2\beta_{\mathrm{u}}^2.
\end{align}

{\Large$\cdot$} \emph{The signal gain uncertainty term} $q_\mathrm{u}\texttt{var}\{\mathbf{a}^H\mathbf{h}|\mathbf{s}_\mathrm{j}\}$

By using the large-scale approximations in (\ref{ah}), (\ref{whwg}), (\ref{wgh}), and (\ref{wgwg}) we have
\begin{align}
\nonumber
\frac{\mathbf{a}^H\mathbf{h}}{M}\asymp\frac{\alpha_1}{\mu}\frac{\mathbf{h}^H\mathbf{h}}{M}.
\end{align}
Thus,
\begin{align}
\nonumber
\texttt{var}\left\{\frac{\mathbf{a}^H\mathbf{h}}{M}  \bigg| \mathbf{s}_\mathrm{j}\right\}&\asymp\frac{\alpha_1^2}{\mu^2M^2}
\texttt{var}\{\mathbf{h}^H\mathbf{h}|\mathbf{s}_\mathrm{j}\}\\
\nonumber
&=\frac{\alpha_1^2}{\mu^2M^2}M\beta_\mathrm{u}^2=\frac{\alpha_1^2}{\mu^2M}\beta_\mathrm{u}^2.
\end{align}
Therefore,
\begin{align}
\label{t2}
\frac{q_\mathrm{u}\texttt{var}\{\mathbf{a}^H\mathbf{h}|\mathbf{s}_\mathrm{j}\}}{M^2}\asymp \frac{q_\mathrm{u}}{M\mu^2}\alpha_1^2\beta_\mathrm{u}^2.
\end{align}

{\Large$\cdot$} \emph{The jamming term $q_\mathrm{j}\mathbb{E}\{|\mathbf{a}^H\mathbf{g}|^2|\mathbf{s}_\mathrm{j}\}$}

By following similar steps as for the desired signal term, we have
\begin{align}
\nonumber
\frac{\mathbf{a}^H\mathbf{g}}{M}\asymp\frac{1}{\mu}\alpha_2^*\beta_{\mathrm{j}}\frac{\mu/M+\eta_{\mathrm{j}}^2\sigma^2}
{\mu/M+\eta_{\mathrm{j}}^2\gamma_\mathrm{j}}
\end{align}
and
\begin{align}
\label{t3}
\frac{q_\mathrm{j}\mathbb{E}\{|\mathbf{a}^H\mathbf{g}|^2|\mathbf{s}_\mathrm{j}\}}{M^2}\asymp \frac{q_\mathrm{j}}{\mu^2}\left(\frac{\mu/M+\eta_{\mathrm{j}}^2\sigma^2}{\mu/M+\eta_{\mathrm{j}}^2\gamma_\mathrm{j}}\right)^2|\alpha_2|^2\beta_\mathrm{j}^2.
\end{align}

{\Large$\cdot$} \emph{The noise term $\sigma^2\mathbb{E}\{\|\mathbf{a}\|^2|\mathbf{s}_\mathrm{j}\}$}

Once again, by following similar steps as for the desired signal term, when $M\to\infty$ we have
\begin{align}
\label{t4}
\frac{\sigma^2\mathbb{E}\{\|\mathbf{a}\|^2|\mathbf{s}_\mathrm{j}\}}{M^2}\asymp
\frac{\sigma^2}{M\mu^2}\left(\alpha_1^2\beta_\mathrm{u}+\eta_\mathrm{u}^2\sigma^2+|\alpha_2|^2\beta_\mathrm{j}\frac{|\alpha_3|^2\eta_\mathrm{j}^2\sigma^2\beta_\mathrm{j}
+(\mu/M+\eta_{\mathrm{j}}^2\sigma^2)^2 }{(\mu/M+\eta_{\mathrm{j}}^2\gamma_\mathrm{j})^2}\right).
\end{align}

Substituting \eqref{t1}, \eqref{t2}, \eqref{t3}, and \eqref{t4} into
\eqref{rho}, we obtain a large-scale approximation of $\rho$ as
\begin{align}
\nonumber
&\rho  \asymp \rho_{\mathrm{RZF}}=\\
\label{rho_rzf1}
&\frac{Mq_\mathrm{u}\alpha_1^2\beta_\mathrm{u}^2}{q_\mathrm{u}\alpha_1^2\beta_\mathrm{u}^2+Mq_\mathrm{j}\left(\frac{\mu/M+\eta_\mathrm{j}^2\sigma^2}{\mu/M+\eta_\mathrm{j}^2\gamma_\mathrm{j}}\right)^2|\alpha_2|^2\beta_\mathrm{j}^2+\sigma^2\left(\alpha_1^2\beta_\mathrm{u}
+\eta_\mathrm{u}^2\sigma^2+|\alpha_2|^2\beta_\mathrm{j}\frac{|\alpha_3|^2\eta_\mathrm{j}^2\sigma^2\beta_\mathrm{j}+(\mu/M+\eta_{\mathrm{j}}^2\sigma^2)^2 }{(\mu/M+\eta_{\mathrm{j}}^2\gamma_\mathrm{j})^2}\right)}.
\end{align}
By denoting $\delta_1=|\mathbf{s}_\mathrm{j}^T\mathbf{s}_\mathrm{u}^*|^2$, $\delta_2=|\mathbf{s}_\mathrm{j}^T\mathbf{s}_\mathrm{\overline{u}}^*|^2$ and using the fact that $\alpha_1^2=\eta_\mathrm{u}^2\tau p_\mathrm{u}$, $|\alpha_2|^2=\eta_\mathrm{u}^2\tau p_\mathrm{j}\delta_1$,  $|\alpha_3|^2=\eta_\mathrm{j}^2\tau p_\mathrm{j}\delta_2$, the effective SINR in (\ref{rho_rzf1}) can be rewritten as in \eqref{rho_rzf}.
This completes the proof for Theorem~\ref{Thrm:rho_rzf}.
\end{proof}

From Theorem~\ref{Thrm:rho_rzf} and its proof, we revisit our discussions in the end of Section~\ref{sec:probsetup} regarding to the impact of different power terms in the effective SINR. We can see that when $M$ is large, the power terms associated with the channel uncertainty and additive noise scale with $M$ and are negligible compared to the desired signal term and the jamming term, which scale with $M^2$.

Moreover, as we have discussed in Remark 1, the effective SINR $\rho_{\mathrm{RZF}}$ is dependent on $\delta_1=|\mathbf{s}_\mathrm{j}^T\mathbf{s}_\mathrm{u}^*|^2$ and $\delta_2=|\mathbf{s}_\mathrm{j}^T\mathbf{s}_\mathrm{\overline{u}}^*|^2$. Although $\mathbf{s}_\mathrm{j}$ is unknown a priori, the BS can estimate $\delta_1$ and $\delta_2$ very accurately thanks to the asymptotic properties of massive MIMO. Following from (\ref{yth}), (\ref{ytg}), and the law of large number, we have
\begin{align}
\label{yapp}
\left\{\begin{array}{l}
  \frac{1}{M}\|\mathbf{y}_\mathrm{u}\|^2 ~\asymp ~\tau p_\mathrm{u}\beta_\mathrm{u}+\tau p_\mathrm{j}\beta_\mathrm{j}\delta_1 + \sigma^2, \\
  \frac{1}{M}\|\mathbf{y}_\mathrm{j}\|^2 ~\asymp ~\tau p_\mathrm{j}\beta_\mathrm{j}\delta_2 + \sigma^2.
\end{array}
\right.
\end{align}
Therefore, the BS can estimate $\delta_1$ and $\delta_2$ as
\begin{align}
\label{delapp}
\left\{\begin{array}{l}
  \widehat{\delta}_1 = \frac{1}{\tau p_\mathrm{j}\beta_\mathrm{j}}\left(\frac{\|\mathbf{y}_\mathrm{u}\|^2}{M}-\tau p_\mathrm{u}\beta_\mathrm{u}-\sigma^2\right), \\
\widehat{\delta}_2 = \frac{1}{\tau p_\mathrm{j}\beta_\mathrm{j}}\left(\frac{\|\mathbf{y}_\mathrm{j}\|^2}{M}-\sigma^2\right).
\end{array}
\right.
\end{align}
When $M$ is large, which is the case in massive MIMO, the estimates in (\ref{delapp}) are very close to the true values of $\delta_1$ and $\delta_2$. Thus, the BS can evaluate the effective SINR $\rho_{\mathrm{RZF}}$ with high accuracy, as we will show in the numerical analysis (in Section~\ref{sec:numerical}).

A closed-form large-scale approximation of the effective SINR achieved by the MMSE-type receiver can be obtained from Theorem~\ref{Thrm:rho_rzf} by setting $\mu=\frac{\sigma_\mathrm{e}^2}{q_\mathrm{j}}$ in (\ref{rho_rzf}), which gives $\rho \asymp \rho_{\mathrm{MMSE}}$, where $\rho_{\mathrm{MMSE}}$ is given by
\begin{align}
\nonumber
&\rho_{\mathrm{MMSE}} = \\
\label{rho_mmse}
&\frac{M\tau p_\mathrm{u}q_\mathrm{u}\beta_\mathrm{u}^2}{\tau p_\mathrm{u}q_\mathrm{u}\beta_\mathrm{u}^2+M\tau p_\mathrm{j}q_\mathrm{j}\delta_1\beta_\mathrm{j}^2\left(\frac{\sigma_\mathrm{e}^2/(Mq_\mathrm{j})+\eta_\mathrm{j}^2\sigma^2}{\sigma_\mathrm{e}^2/(Mq_\mathrm{j})+\eta_\mathrm{j}^2\gamma_\mathrm{j}}\right)^2+\sigma^2\left(\tau p_\mathrm{u}\beta_\mathrm{u}+\sigma^2
+\tau p_\mathrm{j}\delta_1\beta_\mathrm{j}\frac{\tau p_\mathrm{j}\delta_2 \eta_\mathrm{j}^4\sigma^2\beta_\mathrm{j}+(\sigma_\mathrm{e}^2/(Mq_\mathrm{j})+\eta_\mathrm{j}^2\sigma^2)^2 }{(\sigma_\mathrm{e}^2/(Mq_\mathrm{j})+\eta_\mathrm{j}^2\gamma_\mathrm{j})^2}\right)}.
\end{align}
Similarly, a closed-form large-scale approximation of the effective SINR achieved by the ZF-type receiver can be obtained as $\rho \asymp \rho_\mathrm{ZF}$, where $\rho_\mathrm{ZF}$ is given by
\begin{align}
\label{rho_zf}
\rho_\mathrm{ZF} = \frac{M\tau p_\mathrm{u}q_\mathrm{u}\beta_\mathrm{u}^2}{\tau p_\mathrm{u}q_\mathrm{u}\beta_\mathrm{u}^2+
M\tau p_\mathrm{j}q_\mathrm{j}\delta_1\beta_\mathrm{j}^2\frac{\sigma^4}{\gamma_\mathrm{j}^2}+\sigma^2(\tau p_\mathrm{u}\beta_\mathrm{u} +\sigma^2+\tau p_\mathrm{j}\delta_1\beta_\mathrm{j}\frac{\sigma^2}{\gamma_\mathrm{j}})}.
\end{align}

Furthermore, when the number of antennas $M$ tends to infinity, both the effective SINRs with MMSE-type and ZF-type receive filters converge to the same finite limit. Let us call this the asymptotic effective SINR $\rho_{\mathrm{asy}}$, then we have
\begin{align}
\label{rho_asy}
\lim_{M\to\infty}\rho_{\mathrm{MMSE}}=\lim_{M\to\infty}\rho_{\mathrm{ZF}}=\lim_{M\to\infty}\rho_{\mathrm{RZF}}\triangleq \rho_{\mathrm{asy}}=\frac{\gamma_\mathrm{j}^2}{\delta_1\sigma^4}
\frac{p_\mathrm{u}q_\mathrm{u}\beta_\mathrm{u}^2}{p_\mathrm{j}q_\mathrm{j}\beta_\mathrm{j}^2}.
\end{align}
This implies that when the number of antennas $M$ grows large, the effective SINR with the RZF receiver for any regularization factor $\mu$ converges to a finite limit, which is independent of $\mu$. Note that, besides the expected ``signal-to-jamming ratio'' term $\frac{p_\mathrm{u}q_\mathrm{u}\beta_\mathrm{u}^2}{p_\mathrm{j}q_\mathrm{j}\beta_\mathrm{j}^2}$ \cite{PRB16WCL}, the asymptotic effective SINR contains the scaling factor
\begin{align}
\frac{\gamma_\mathrm{j}^2}{\delta_1\sigma^4}=\frac{(\tau p_\mathrm{j}\delta_2\beta_\mathrm{j}+\sigma^2)^2}{\delta_1\sigma^4},
\end{align}
which is resulted from the proposed jamming channel estimation scheme.
It is interesting to see that this scaling factor increases with the jamming pilot power $p_\mathrm{j}$, i.e., the benefit of the proposed scheme is greater with stronger jamming pilot signal. Moreover, after some simple mathematical manipulations it can be shown that the asymptotic effective SINR $\rho_{\mathrm{asy}}$ is an increasing function w.r.t. $p_\mathrm{j}$ for $p_\mathrm{j}>\frac{\sigma^4}{\tau^2\beta_\mathrm{j}^2\delta_2^2}$. In other words, the achievable rate increases with the jamming pilot power $p_\mathrm{j}$ when $p_\mathrm{j}$ is higher than a certain level. Intuitively, when $p_\mathrm{j}$ is large enough, the improvement in the estimation quality of jamming channel, resulted from the increase of $p_\mathrm{j}$,  overcomes the degradation in estimation quality of the legitimate channel.
Therefore, it is expected that the proposed receivers can work well or even better with stronger jamming pilot signal.

\begin{remark}
Although we focus on a single user setup, the proposed scheme can be generalized to having multiple users. Indeed, in a system with multiple users, the effective SINR of user $k$ will be similar to the effective SINR in (\ref{rho}) with an additional inter-user interference term in the denominator. This additional term can be written as $\sum_{i\neq k}q_i\mathbb{E}\{|\mathbf{a}_k^H\mathbf{h}_i|^2|\mathbf{s}_\mathrm{j}\}$, where $q_i$, $\mathbf{a}_i$, and $\mathbf{h}_i$ are the data transmit power, receiver filter, and channel gain of user $i$. Assume that the pilot sequences assigned to the users are orthogonal and $\mathbf{a}_k$ is designed as in (\ref{arzf}) (by setting $\mathbf{h}=\mathbf{h}_k$), then $\mathbf{a}_k$ is independent of $\mathbf{h}_i$ for all $i\neq k$. Therefore, $\frac{1}{M^2}\sum_{i\neq k}q_i\mathbb{E}\{|\mathbf{a}_k^H\mathbf{h}_i|^2|\mathbf{s}_\mathrm{j}\}\asymp 0$. In other words, the proposed receiver is robust against inter-user interference when $M\to\infty$ and can also be applied for a multiple users setup.
\end{remark}

\subsection{Optimal RZF Receiver}\label{subsec:opt_rzf}

For a finite number of antennas $M$, the performance of the proposed RZF receiver highly depends on the regularization factor $\mu$. In the following, we derive the optimal $\mu$, which maximize the effective SINR $\rho_\mathrm{RZF}$, i.e., we investigate the optimal RZF receiver. Using the results from Theorem~\ref{Thrm:rho_rzf}, we can prove the following corollary.

\begin{corol}\label{cor:zf_opt}
Assume that the RZF receiver $\mathbf{a}=\mathbf{a}_{\mathrm{RZF}}$ is used. The effective SINR $\rho_\mathrm{RZF}$ in (\ref{rho_rzf}) approaches its maximum when $\mu\to 0$, i.e., the ZF-type receiver is the optimal RZF receiver.
\end{corol}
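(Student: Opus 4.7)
The plan is to show that the denominator of $\rho_{\mathrm{RZF}}$ in \eqref{rho_rzf}, viewed as a function of $\mu\geq 0$, is monotonically non-decreasing, so that the SINR is maximized by taking $\mu\to 0$. Since the numerator $M\tau p_\mathrm{u}q_\mathrm{u}\beta_\mathrm{u}^2$ does not depend on $\mu$, the problem reduces to studying the $\mu$-dependence of the two terms in the denominator, namely
\begin{align*}
T_1(\mu)&=M\tau p_\mathrm{j}q_\mathrm{j}\delta_1\beta_\mathrm{j}^2\left(\frac{\mu/M+\eta_\mathrm{j}^2\sigma^2}{\mu/M+\eta_\mathrm{j}^2\gamma_\mathrm{j}}\right)^2,\\
T_2(\mu)&=\sigma^2\tau p_\mathrm{j}\delta_1\beta_\mathrm{j}\,\frac{\tau p_\mathrm{j}\delta_2\eta_\mathrm{j}^4\sigma^2\beta_\mathrm{j}+(\mu/M+\eta_\mathrm{j}^2\sigma^2)^2}{(\mu/M+\eta_\mathrm{j}^2\gamma_\mathrm{j})^2}.
\end{align*}

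The first step is to introduce $x=\mu/M\geq 0$ together with the shorthands $a=\eta_\mathrm{j}^2\sigma^2$, $b=\eta_\mathrm{j}^2\gamma_\mathrm{j}$, and $c=\tau p_\mathrm{j}\delta_2\eta_\mathrm{j}^4\sigma^2\beta_\mathrm{j}$. Because $\gamma_\mathrm{j}=\tau p_\mathrm{j}\delta_2\beta_\mathrm{j}+\sigma^2\geq\sigma^2$, one has $b\geq a\geq 0$. For $T_1$, the factor $(x+a)/(x+b)$ has derivative $(b-a)/(x+b)^2\geq 0$, and since this ratio is non-negative, its square is non-decreasing in $x$; hence $T_1$ is non-decreasing in $\mu$.

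The more delicate step, and the main obstacle, is $T_2$, because its $x$-dependent factor $g(x)=(c+(x+a)^2)/(x+b)^2$ is a ratio of comparable quadratics whose monotonicity is not obvious at a glance. A direct differentiation gives
\begin{align*}
g'(x)=\frac{2\bigl[(x+a)(b-a)-c\bigr]}{(x+b)^3}.
\end{align*}
The key observation that closes the argument is the algebraic identity $c=a(b-a)$, which follows by substituting $\eta_\mathrm{j}^{2}=1/(p_\mathrm{j}\beta_\mathrm{j}+\sigma^2)$ into the definitions of $a$, $b$, $c$ (one gets $b-a=\eta_\mathrm{j}^2\tau p_\mathrm{j}\delta_2\beta_\mathrm{j}$ and $c=a(b-a)$). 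Plugging this in collapses the numerator to $2(b-a)x$, so $g'(x)=2(b-a)x/(x+b)^3\geq 0$ for all $x\geq 0$, and therefore $T_2$ is also non-decreasing in $\mu$.

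Combining these two monotonicity statements shows that the entire denominator of $\rho_\mathrm{RZF}$ is non-decreasing in $\mu\geq 0$, so $\rho_\mathrm{RZF}$ attains its supremum as $\mu\to 0^{+}$. Since $\lim_{\mu\to 0^+}\mathbf{a}_\mathrm{RZF}$ (after the harmless rescaling used in \eqref{rzf2}--\eqref{azf}) coincides with the ZF-type receiver $\mathbf{a}_\mathrm{ZF}$, the ZF-type receiver is the optimal RZF receiver. A short sanity check at the end—verifying that $\lim_{\mu\to 0}\rho_\mathrm{RZF}$ as given by \eqref{rho_rzf} agrees with $\rho_\mathrm{ZF}$ in \eqref{rho_zf}—confirms the conclusion and ties the corollary back to the earlier expressions.
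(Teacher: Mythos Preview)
Your proof is correct and follows essentially the same route as the paper. The paper compares $1/\rho_\mathrm{RZF}-1/\rho_\mathrm{ZF}$ directly and factorizes the difference to show it is non-negative using $\gamma_\mathrm{j}\geq\sigma^2$; you instead establish monotonicity of the denominator in $\mu$ via differentiation, with the same algebraic identity (your $c=a(b-a)$ is precisely the relation the paper exploits when rewriting the second term) doing the work in both cases.
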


\begin{proof}
The proof is given in Appendix~\ref{prof:zf_opt}.
\end{proof}

Corollary~\ref{cor:zf_opt} shows that in massive MIMO systems with jamming attacks, the simple ZF-type receiver outperforms other RZF receivers with non-zero $\mu$, including the MMSE-type receiver. This is a surprising result, but understandable since the MMSE-type receiver is not an optimal receiver,  as discussed in Section~\ref{subsec:mmse}. As shown in \cite{PHS05TCOM}, the regularization factor $\mu$ determines the amount of interference (or jamming in our case) remaining at the receiver as compared to the additive noise. Moreover, jamming pilot contamination, resulting in coherent combining of the jamming signal is a main source of limitation in massive MIMO. Therefore, a favorable approach to deal with jamming attacks in massive MIMO is to perform ZF processing and focus on suppressing the jamming signal.

\subsection{Extremely Strong Jamming}\label{subsec:str_jammer}
Next, we investigate the performance of the proposed receivers for a system with extremely strong jamming, where the jamming powers grow without bound. Based on the results from Theorem~\ref{Thrm:rho_rzf}, we can prove the following.
\begin{corol}\label{cor:str_jammer}
Assume that the RZF receiver $\mathbf{a}=\mathbf{a}_{\mathrm{RZF}}$ is used. For $p_\mathrm{j}=\lambda q_\mathrm{j} \to\infty$, where $\lambda$ is a finite strictly positive constant, we have:
\begin{itemize}
\item For $\mu>0$: $\rho_\mathrm{RZF} \to 0$.
\item For $\mu=0$:
$\rho_\mathrm{RZF}=\rho_\mathrm{ZF} \to \frac{M\tau p_\mathrm{u}q_\mathrm{u}\beta_\mathrm{u}^2}{\tau p_\mathrm{u}q_\mathrm{u}\beta_\mathrm{u}^2+\frac{M\delta_1\sigma^4}{\lambda\tau\delta_2^2}+\sigma^2(\tau p_\mathrm{u}\beta_\mathrm{u} +\sigma^2+\frac{\delta_1}{\delta_2}\sigma^2)}>0.$
\end{itemize}

\end{corol}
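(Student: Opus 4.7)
The plan is to substitute the coupling $q_\mathrm{j}=p_\mathrm{j}/\lambda$ into the closed-form large-scale approximation $\rho_\mathrm{RZF}$ from Theorem~\ref{Thrm:rho_rzf} and read off the leading-order behavior of each piece as $p_\mathrm{j}\to\infty$. As preparation I record the asymptotics of the two $p_\mathrm{j}$-dependent auxiliary quantities: $\eta_\mathrm{j}^2\sigma^2 = \sigma^2/(p_\mathrm{j}\beta_\mathrm{j}+\sigma^2)\to 0$ and $\eta_\mathrm{j}^2\gamma_\mathrm{j}=(\tau p_\mathrm{j}\delta_2\beta_\mathrm{j}+\sigma^2)/(p_\mathrm{j}\beta_\mathrm{j}+\sigma^2)\to \tau\delta_2$, both at rate $O(1/p_\mathrm{j})$. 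These govern the ratio $R\triangleq (\mu/M+\eta_\mathrm{j}^2\sigma^2)/(\mu/M+\eta_\mathrm{j}^2\gamma_\mathrm{j})$, which is the key factor controlling the jamming term in the denominator of (\ref{rho_rzf}).

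For the case $\mu>0$, the decisive observation is that $R\to (\mu/M)/(\mu/M+\tau\delta_2)$, a strictly positive constant. Consequently the jamming term $M\tau p_\mathrm{j}q_\mathrm{j}\delta_1\beta_\mathrm{j}^2 R^2$ in the denominator grows like $p_\mathrm{j}^2/\lambda$, while the numerator is fixed and the remaining denominator pieces stay bounded. Therefore $\rho_\mathrm{RZF}\to 0$, establishing the first bullet.

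For $\mu=0$, I would work from the ZF-type expression (\ref{rho_zf}), which Theorem~\ref{Thrm:rho_rzf} specializes to once the $\mu/M$ offsets cancel and $R=\sigma^2/\gamma_\mathrm{j}\sim \sigma^2/(\tau p_\mathrm{j}\delta_2\beta_\mathrm{j})$. The jamming term then equals $M\tau p_\mathrm{j}q_\mathrm{j}\delta_1\beta_\mathrm{j}^2\sigma^4/\gamma_\mathrm{j}^2$, whose leading behavior after inserting $q_\mathrm{j}=p_\mathrm{j}/\lambda$ is $M\delta_1\sigma^4/(\lambda\tau\delta_2^2)$. The bracketed factor inside the noise term reduces to $\tau p_\mathrm{j}\delta_1\beta_\mathrm{j}\cdot\sigma^2/\gamma_\mathrm{j}\to \delta_1\sigma^2/\delta_2$. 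Assembling the desired-signal, jamming, and noise contributions yields exactly the quoted limit, which is strictly positive because all three summands in the denominator are finite and positive.

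The main obstacle is the bookkeeping for the $\mu=0$ case: the factor $\eta_\mathrm{j}^4$ hidden in the noise term cancels between the numerator and denominator of the block $(\mu/M+\eta_\mathrm{j}^2\sigma^2)^2/(\mu/M+\eta_\mathrm{j}^2\gamma_\mathrm{j})^2$, and a term nominally of order $p_\mathrm{j}$ (from the $\tau p_\mathrm{j}\delta_1\beta_\mathrm{j}$ prefactor) must be tracked against $R^2\sim 1/p_\mathrm{j}^2$ carefully enough to deliver an $O(1)$ limit rather than $0$ or $\infty$. This sensitivity to the order of cancellation is also precisely what explains why $\mu>0$ and $\mu=0$ give qualitatively different answers; once it is pinned down, the rest of the argument is a mechanical limit evaluation.
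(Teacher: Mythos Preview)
Your proposal is correct and follows essentially the same route as the paper's proof in Appendix~\ref{prof:str_jammer}: both start from the closed form in Theorem~\ref{Thrm:rho_rzf}, compute the limit of the ratio $R=(\mu/M+\eta_\mathrm{j}^2\sigma^2)/(\mu/M+\eta_\mathrm{j}^2\gamma_\mathrm{j})$ via $\eta_\mathrm{j}^2\sigma^2\to 0$ and $\eta_\mathrm{j}^2\gamma_\mathrm{j}\to\tau\delta_2$, and then read off that the jamming term diverges for $\mu>0$ while for $\mu=0$ each denominator block in (\ref{rho_zf}) has the stated finite limit. One small inaccuracy: for $\mu>0$ the noise block $\sigma^2\tau p_\mathrm{j}\delta_1\beta_\mathrm{j}\cdot(\cdots)/(\mu/M+\eta_\mathrm{j}^2\gamma_\mathrm{j})^2$ actually grows like $p_\mathrm{j}$, not $O(1)$ as you assert, but this does not affect the conclusion since the jamming term already drives the denominator to infinity; the paper simply avoids the claim and argues directly from the divergence of that single term.
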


\begin{proof}
The proof is given in Appendix~\ref{prof:str_jammer}.
\end{proof}

In Corollary~\ref{cor:str_jammer}, we only consider cases when the power of the jamming pilot signal $p_\mathrm{j}$ grows jointly with the power of the jamming data signal $q_\mathrm{j}$. The results for the other trivial cases, i.e., ($p_\mathrm{j}=\mathrm{const.}, q_\mathrm{j} \to\infty $) and ($p_\mathrm{j}\to\infty, q_\mathrm{j} =\mathrm{const.}$) can be  obtained as follows:
\begin{itemize}
\item If $p_\mathrm{j}=\mathrm{const.}, q_\mathrm{j} \to\infty $: $\rho_\mathrm{RZF}$  converges to zero for all values of $\mu$.
\item If $p_\mathrm{j}\to \infty, q_\mathrm{j}=\mathrm{const.}$: $\rho_\mathrm{RZF}$ converges to zero for $\mu>0$ and $\rho_\mathrm{RZF}=\rho_\mathrm{ZF}$ converges to a non-zero finite value for $\mu=0$.
\end{itemize}

Corollary~\ref{cor:str_jammer} implies that the proposed ZF-type receiver works well even under extremely strong jamming, as long as the power $p_\mathrm{j}$ of the jamming pilot signal increases jointly with the power $q_\mathrm{j}$ of the jamming data signal. This is consistent with our analysis in Section~\ref{subsec:la}, which indicated that the proposed scheme can work well and even better with stronger jamming pilot signal.
This behavior is due to the fact that our proposed receivers are constructed based on the jamming channel estimate $\mathbf{\widehat{g}}$, which is obtained from an empty pilot, and its quality is improved when the jamming pilot power $p_\mathrm{j}$ increases. We can thus reject the jamming better and better as $p_\mathrm{j} \to \infty$, but the jamming still remains since also $q_\mathrm{j} \to \infty$.


\section{Power Allocation}\label{sec:opt_p}

In this section, we will show how the legitimate user should allocate its power between the pilot phase and data transmission phase to maximize the achievable rate, in the presence of jamming. We consider two optimization approaches, which use the asymptotic effective SINR $\rho_\mathrm{asy}$ and the ZF-type effective SINR $\rho_\mathrm{ZF}$ as the objective functions for the cases of infinite and finite number of antennas, respectively.

\subsection{Power Allocation for Infinite $M$}
First, we derive the optimal power allocation for the case with an infinite number of antennas. Assuming that the jamming transmit powers $(p_\mathrm{u}, q_\mathrm{u})$ are fixed, the legitimate user aims to maximize $\rho_{\mathrm{asy}}$ by optimally allocating its transmit powers $(p_\mathrm{u}, q_\mathrm{u})$ during the pilot phase and data transmission phase. We assume that the transmit powers of the legitimate user satisfy
\begin{align}
\label{pc:Pu}
\tau p_\mathrm{u}+(T-\tau)q_\mathrm{u}\leq TP_\mathrm{u},
\end{align}
where $P_\mathrm{u}$ is the average power constraint of the legitimate user.
Given the power constraint in (\ref{pc:Pu}), we consider the asymptotically optimal power allocation problem, expressed as
\begin{align}
\label{op:asy}
\underset{p_\mathrm{u},q_\mathrm{u}}{\mathrm{maximize}}~&~~\rho_{\mathrm{asy}}\\
\nonumber
\mathrm{subject~to}&~~\tau p_\mathrm{u} + (T-\tau)q_\mathrm{u} \leq TP_\mathrm{u},\\
\nonumber
    &~~p_\mathrm{u}\geq 0, q_\mathrm{u}\geq 0.
\end{align}
\begin{pro}\label{pro:opt_asy}
The solution to (\ref{op:asy}) is given by
\begin{align}
\label{ops:asy}
\left\{\begin{array}{l}
  p_\mathrm{opt}^\mathrm{asy}= \frac{T}{2\tau}P_\mathrm{u},\\
  q_\mathrm{opt}^\mathrm{asy}= \frac{T}{2(T-\tau)}P_\mathrm{u}.
\end{array}
\right.
\end{align}
\end{pro}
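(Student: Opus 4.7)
The plan is to exploit the fact that, among all the quantities appearing in $\rho_{\mathrm{asy}}$, only the product $p_\mathrm{u}q_\mathrm{u}$ depends on the legitimate user's power allocation. Concretely, I would first inspect the expression
\[
\rho_{\mathrm{asy}}=\frac{\gamma_\mathrm{j}^2}{\delta_1\sigma^4}\,\frac{p_\mathrm{u}q_\mathrm{u}\beta_\mathrm{u}^2}{p_\mathrm{j}q_\mathrm{j}\beta_\mathrm{j}^2}
\]
and observe that the prefactor $\gamma_\mathrm{j}^2/(\delta_1\sigma^4)$, together with $p_\mathrm{j}$, $q_\mathrm{j}$, $\beta_\mathrm{u}$, $\beta_\mathrm{j}$, is independent of $(p_\mathrm{u},q_\mathrm{u})$. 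Hence (\ref{op:asy}) is equivalent to
\[
\max_{p_\mathrm{u},q_\mathrm{u}\geq 0}\; p_\mathrm{u}q_\mathrm{u}\quad\text{s.t.}\quad \tau p_\mathrm{u}+(T-\tau)q_\mathrm{u}\leq TP_\mathrm{u}.
\]

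Next I would handle this reduced problem. Since $p_\mathrm{u}q_\mathrm{u}$ is strictly increasing in each variable separately on the non-negative orthant, the optimum must make the linear budget constraint tight. I would then apply the AM-GM inequality to the non-negative quantities $a=\tau p_\mathrm{u}$ and $b=(T-\tau)q_\mathrm{u}$, giving
\[
\tau(T-\tau)\,p_\mathrm{u}q_\mathrm{u}=ab\leq \left(\frac{a+b}{2}\right)^{2}\leq \left(\frac{TP_\mathrm{u}}{2}\right)^{2},
\]
with equality iff $a=b$ and $a+b=TP_\mathrm{u}$. Solving $\tau p_\mathrm{u}=(T-\tau)q_\mathrm{u}=TP_\mathrm{u}/2$ yields exactly (\ref{ops:asy}). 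Alternatively, one could reach the same conclusion via a Lagrangian/KKT argument: the stationarity conditions $q_\mathrm{u}=\nu\tau$ and $p_\mathrm{u}=\nu(T-\tau)$ combined with the active constraint give the same pair.

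There is no real obstacle here; the proof essentially reduces to recognizing that the asymptotic SINR is proportional to $p_\mathrm{u}q_\mathrm{u}$ and then invoking AM-GM. The only minor point worth spelling out in the write-up is justifying why the budget constraint must hold with equality (monotonicity of the product) and why the boundary cases $p_\mathrm{u}=0$ or $q_\mathrm{u}=0$ can be discarded (they give $\rho_{\mathrm{asy}}=0$, which is clearly suboptimal whenever $P_\mathrm{u}>0$).
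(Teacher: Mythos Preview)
Your proposal is correct and follows essentially the same approach as the paper: both observe that $\rho_{\mathrm{asy}}$ depends on $(p_\mathrm{u},q_\mathrm{u})$ only through the product $p_\mathrm{u}q_\mathrm{u}$, reduce to maximizing this product under the linear budget constraint, and argue that the constraint must be active. The only cosmetic difference is in the final step, where the paper substitutes $p_\mathrm{u}=(TP_\mathrm{u}-(T-\tau)q_\mathrm{u})/\tau$ and optimizes the resulting one-variable quadratic, while you use AM--GM on $a=\tau p_\mathrm{u}$ and $b=(T-\tau)q_\mathrm{u}$; both are equally elementary and yield the same optimizer.
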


\begin{proof}
Following from (\ref{rho_asy}), the objective function $\rho_\mathrm{asy}$ in (\ref{op:asy}) can be rewritten as
\begin{align}
\label{rho_asy2}
\rho_{\mathrm{asy}}=\frac{(\tau p_\mathrm{j} \delta_2 \beta_\mathrm{j} + \sigma^2)^2\beta_\mathrm{u}^2}{p_\mathrm{j}q_\mathrm{j}\delta_1\sigma^4\beta_\mathrm{j}^2}p_\mathrm{u}q_\mathrm{u}.
\end{align}
We note that $\rho_{\mathrm{asy}}$ is a linear function of the multiplication $p_\mathrm{u}q_\mathrm{u}$. Therefore, the optimization problem (\ref{op:asy}) is equivalent to
\begin{align}
\label{op:asy2}
\underset{p_\mathrm{u},q_\mathrm{u}}{\mathrm{maximize}}~&~~p_\mathrm{u}q_\mathrm{u}\\
\nonumber
\mathrm{subject~to}&~~\tau p_\mathrm{u} + (T-\tau)q_\mathrm{u} \leq TP_\mathrm{u},\\
\nonumber
    &~~p_\mathrm{u}\geq 0, q_\mathrm{u}\geq 0.
\end{align}
Since $p_\mathrm{u}q_\mathrm{u}$ is an increasing function w.r.t. $p_\mathrm{u}$ and $q_\mathrm{u}$, the optimal solution is achieved with equality in the first constraint, i.e., $p_\mathrm{u} = (TP_\mathrm{u}-(T-\tau)q_\mathrm{u})/\tau$. Substituting this equality into the objective function in (\ref{op:asy2}), we achieve an equivalent unconstrained optimization problem with a second-order polynomial objective function and a single variable $q_\mathrm{u}$, which can be easily solved to obtain (\ref{ops:asy}).
\end{proof}

Proposition~\ref{pro:opt_asy} shows that, for the given average power budget $P_\mathrm{u}$ and coherence interval $T$, the asymptotic optimal power allocation ($p_\mathrm{opt}^\mathrm{asy}, p_\mathrm{opt}^\mathrm{asy}$) depends only on the pilot length $\tau$. Moreover, we note that $\tau p_\mathrm{opt}^\mathrm{asy} = (T-\tau)p_\mathrm{opt}^\mathrm{asy}=TP_\mathrm{u}/2$. Intuitively, it is optimal to equally divide the transmit energy for the pilot and data transmission phases when the number of antennas is very large ($M \to \infty$). This result confirms the importance of the channel estimation in massive MIMO.

The optimal power allocation problem in (\ref{op:asy}) is designed for a system with infinite number of antennas. However, as it will be shown in the numerical analysis, the performance loss is relative small when the asymptotic optimal power allocation in (\ref{ops:asy}) is employed for a system with a finite number of antennas. Therefore, the power allocation in (\ref{ops:asy}) can be applied as a simple heuristic power allocation that does not depend on the jammer's powers and signal structure.

\subsection{Power Allocation for Finite $M$}

In the following, we consider the power allocation that maximizes the achievable rate for a system with finite number of antennas. We have shown in Section~\ref{sec:opt_rx} that the ZF-type receiver is the optimal RZF receiver. Thus, we focus on optimizing the power allocation for a system with the ZF-type receiver. Accordingly, the optimal power allocation problem can be expressed as
\begin{align}
\label{op:rho_zf}
\underset{p_\mathrm{u},q_\mathrm{u}}{\mathrm{maximize}}~&~~\rho_\mathrm{ZF}\\
\nonumber
\mathrm{subject~to}&~~\tau p_\mathrm{u} + (T-\tau)q_\mathrm{u} \leq TP_\mathrm{u},\\
\nonumber
    &~~p_\mathrm{u}\geq 0, q_\mathrm{u}\geq 0.
\end{align}

The solution for the optimization problem (\ref{op:rho_zf}) is given by the following proposition.

\begin{pro}\label{pro:op_rho_zf}
The solution to (\ref{op:rho_zf}) is given by
\begin{align}
\label{ops:rho_zf}
\left\{\begin{array}{l}
  p_\mathrm{opt}= \frac{\sqrt{\nu^2+\nu\beta_\mathrm{u}\sigma^2TP_\mathrm{u}}-\nu}{\tau\beta_\mathrm{u}\sigma^2},\\
  q_\mathrm{opt}= \frac{\sqrt{\nu+\beta_\mathrm{u}\sigma^2TP_\mathrm{u}}(\sqrt{\nu+\beta_\mathrm{u}\sigma^2TP_\mathrm{u}}-\sqrt{\nu})}{(T-\tau)\beta_\mathrm{u}\sigma^2},
\end{array}
\right.
\end{align}
where
\begin{align} \label{nu-definition}
\nu=\frac{\tau p_\mathrm{j}\beta_\mathrm{j}\delta_1\sigma^4}{\tau p_\mathrm{j} \delta_2 \beta_\mathrm{j} + \sigma^2}\left(\frac{Mq_\mathrm{j}\beta_\mathrm{j}}{\tau p_\mathrm{j} \delta_2 \beta_\mathrm{j} + \sigma^2}+1 \right)+\sigma^4.
\end{align}
\end{pro}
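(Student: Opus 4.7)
The plan is to reduce (\ref{op:rho_zf}) to a one-dimensional convex minimization. First, I would observe that the quantities $p_\mathrm{j}$ and $q_\mathrm{j}$ enter $\rho_\mathrm{ZF}$ only through the constant $\nu$ defined in (\ref{nu-definition}). Indeed, a direct rearrangement of (\ref{rho_zf}) (dividing numerator and denominator by $\tau p_\mathrm{u}q_\mathrm{u}\beta_\mathrm{u}^2$) gives
\begin{equation*}
\rho_\mathrm{ZF}=\frac{M}{1+\dfrac{\sigma^2}{q_\mathrm{u}\beta_\mathrm{u}}+\dfrac{\nu}{\tau p_\mathrm{u} q_\mathrm{u}\beta_\mathrm{u}^2}}.
\end{equation*}
Two consequences are immediate: $\rho_\mathrm{ZF}$ is strictly increasing in each of $p_\mathrm{u}$ and $q_\mathrm{u}$ separately, so the power constraint must be tight at any optimum; and maximizing $\rho_\mathrm{ZF}$ is equivalent to minimizing $F(p_\mathrm{u},q_\mathrm{u})\triangleq \sigma^2/(q_\mathrm{u}\beta_\mathrm{u})+\nu/(\tau p_\mathrm{u}q_\mathrm{u}\beta_\mathrm{u}^2)$.

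Next, I would introduce energy variables $E_p\triangleq\tau p_\mathrm{u}$ and $E_q\triangleq(T-\tau)q_\mathrm{u}$, in which the active constraint reads simply $E_p+E_q=TP_\mathrm{u}$. Eliminating $E_q=TP_\mathrm{u}-E_p$ turns $F$ into a single-variable function of $E_p\in(0,TP_\mathrm{u})$. The term $1/E_q$ is convex in $E_p$, and $1/(E_pE_q)=1/[E_p(TP_\mathrm{u}-E_p)]$ is the reciprocal of a positive strictly concave function and is therefore strictly convex on the interval. The reduced objective is thus strictly convex and blows up at both endpoints, so it has a unique interior minimizer, determined by its first-order condition.

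Computing that stationarity equation and clearing denominators, I expect everything to collapse to the quadratic
\begin{equation*}
\beta_\mathrm{u}\sigma^2\,E_p^{\,2}+2\nu\,E_p-\nu TP_\mathrm{u}=0,
\end{equation*}
whose unique positive root is $E_p=\bigl(\sqrt{\nu^2+\nu\beta_\mathrm{u}\sigma^2 TP_\mathrm{u}}-\nu\bigr)/(\beta_\mathrm{u}\sigma^2)$. Dividing by $\tau$ yields $p_\mathrm{opt}$ exactly as in (\ref{ops:rho_zf}), and $q_\mathrm{opt}=(TP_\mathrm{u}-E_p)/(T-\tau)$ gives the companion expression.

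The main obstacle I foresee is purely cosmetic: the natural form of the numerator of $q_\mathrm{opt}$ coming out of the substitution is $TP_\mathrm{u}\beta_\mathrm{u}\sigma^2+\nu-\sqrt{\nu^2+\nu\beta_\mathrm{u}\sigma^2 TP_\mathrm{u}}$, whereas the proposition states it in the symmetric product form $\sqrt{\nu+\beta_\mathrm{u}\sigma^2 TP_\mathrm{u}}\bigl(\sqrt{\nu+\beta_\mathrm{u}\sigma^2 TP_\mathrm{u}}-\sqrt{\nu}\bigr)$. The two match by the one-line identity $\sqrt{\nu^2+\nu\beta_\mathrm{u}\sigma^2 TP_\mathrm{u}}=\sqrt{\nu}\sqrt{\nu+\beta_\mathrm{u}\sigma^2 TP_\mathrm{u}}$. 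Apart from this rewriting, the argument is a textbook convex one-dimensional optimization whose KKT condition happens to be quadratic.
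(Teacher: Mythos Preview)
Your proposal is correct and follows essentially the same route as the paper: rewrite $\rho_\mathrm{ZF}$ so the jamming parameters collapse into the single constant $\nu$, observe monotonicity to tighten the power constraint, reduce to a one-variable convex minimization, and solve the first-order condition. The only cosmetic differences are that the paper eliminates $p_\mathrm{u}$ in favor of $q_\mathrm{u}$ (whereas you eliminate $E_q$ in favor of $E_p$) and leaves the resulting quadratic and the $q_\mathrm{opt}$ rewriting implicit, while you spell out the quadratic $\beta_\mathrm{u}\sigma^2 E_p^{2}+2\nu E_p-\nu TP_\mathrm{u}=0$ and the identity $\sqrt{\nu^2+\nu\beta_\mathrm{u}\sigma^2 TP_\mathrm{u}}=\sqrt{\nu}\sqrt{\nu+\beta_\mathrm{u}\sigma^2 TP_\mathrm{u}}$ explicitly.
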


\begin{proof}

Following from (\ref{rho_zf}), the objective function $\rho_{\mathrm{ZF}}$ in (\ref{op:rho_zf}) can be rewritten as
\begin{align}
\label{rho_zf2}
\rho_\mathrm{ZF} = \frac{M\tau\beta_\mathrm{u}^2 p_\mathrm{u}q_\mathrm{u}}{\tau \beta_\mathrm{u}^2 p_\mathrm{u}q_\mathrm{u}+\tau\beta_\mathrm{u}\sigma^2p_\mathrm{u}+\nu},
\end{align}
where $\nu$ is defined in \eqref{nu-definition}. Therefore, the optimal power allocation problem (\ref{rho_zf}) is equivalent to
\begin{align}
\label{op:rho_zf2}
\underset{p_\mathrm{u},q_\mathrm{u}}{\mathrm{maximize}}~&~~\frac{p_\mathrm{u}q_\mathrm{u}}{\tau \beta_\mathrm{u}^2 p_\mathrm{u}q_\mathrm{u}+\tau\beta_\mathrm{u}\sigma^2p_\mathrm{u}+\nu}\\
\nonumber
\mathrm{subject~to}&~~\tau p_\mathrm{u} + (T-\tau)q_\mathrm{u} \leq TP_\mathrm{u},\\
\nonumber
    &~~p_\mathrm{u}\geq 0, q_\mathrm{u}\geq 0.
\end{align}
By dividing both the numerator and denominator of the objective function in (\ref{op:rho_zf2}) by $p_\mathrm{u}q_\mathrm{u}$, an equivalent optimization problem of (\ref{op:rho_zf2}) is obtained as
\begin{align}
\label{op:rho_zf3}
\underset{p_\mathrm{u},q_\mathrm{u}}{\mathrm{minimize}}~&~~\frac{1}{q_\mathrm{u}}\left(\tau\beta_\mathrm{u}\sigma^2+\frac{\nu}{p_\mathrm{u}}\right)\\
\nonumber
\mathrm{subject~to}&~~\tau p_\mathrm{u} + (T-\tau)q_\mathrm{u} \leq TP_\mathrm{u},\\
\nonumber
    &~~p_\mathrm{u}\geq 0, q_\mathrm{u}\geq 0.
\end{align}

We note that the objective function of the optimization problem in (\ref{op:rho_zf3}) is a decreasing function w.r.t. $p_\mathrm{u}$ and $q_\mathrm{u}$. The optimal solution is achieved when having equality in the first constraint, i.e., $p_\mathrm{u} = (TP_\mathrm{u}-(T-\tau)q_\mathrm{u})/\tau$. Substituting this equality into the objective function in (\ref{op:rho_zf3}), we achieve an equivalent optimization problem with single variable $q_\mathrm{u}$, which is convex. Solving this equivalent optimization problem using the Lagrangian multiplier method and Karush-Kuhn-Tucker (KKT) conditions \cite{cvxb}, we obtain the optimal power allocation in \eqref{ops:rho_zf}.
\end{proof}

It is worth noting that the parameter $\nu$ is dependent on $\delta_1=|\mathbf{s}_\mathrm{j}^T\mathbf{s}_\mathrm{u}^*|^2$ and $\delta_2=|\mathbf{s}_\mathrm{j}^T\mathbf{s}_\mathrm{\overline{u}}^*|^2$. Therefore, in order to utilize the optimal power allocation in (\ref{ops:rho_zf}), the legitimate user needs to know the jamming sequence $\mathbf{s}_\mathrm{j}$ or at least the correlation of the jamming sequence with the pilot sequences, i.e., $|\mathbf{s}_\mathrm{j}^T\mathbf{s}_\mathrm{u}^*|^2$ and $|\mathbf{s}_\mathrm{j}^T\mathbf{s}_\mathrm{\overline{u}}^*|^2$. However, the jamming sequence $\mathbf{s}_\mathrm{j}$ is typically unknown at the legitimate user. Therefore, we consider the achievable rate achieved by the power allocation in (\ref{ops:rho_zf})
as an upper bound that can be achieved with perfect knowledge of  $\mathbf{s}_\mathrm{j}$.

Assuming that $\mathbf{s}_\mathrm{j}$ is unknown, we propose a sub-optimal power allocation solution motivated from the solution in (\ref{ops:rho_zf}). For instance, by replacing $\delta_1$ and $\delta_2$ with their mean values, a sub-optimal power allocation solution for the case of finite number of antennas can be used as
\begin{align}
\label{sops:rho_zf}
\left\{\begin{array}{l}
  p_\mathrm{opt}^\mathrm{sub}= \frac{\sqrt{\tilde{\nu}^2+\tilde{\nu}\beta_\mathrm{u}\sigma^2TP_\mathrm{u}}-\tilde{\nu}}{\tau\beta_\mathrm{u}\sigma^2},\\
  q_\mathrm{opt}^\mathrm{sub}= \frac{\sqrt{\tilde{\nu}+\beta_\mathrm{u}\sigma^2TP_\mathrm{u}}(\sqrt{\tilde{\nu}+\beta_\mathrm{u}\sigma^2TP_\mathrm{u}}-\sqrt{\tilde{\nu}})}{(T-\tau)\beta_\mathrm{u}\sigma^2},
\end{array}
\right.
\end{align}
where $\tilde{\nu}$ is an approximation of the parameter $\nu$, given by
\begin{align}
\label{muap}
\tilde{\nu}=\frac{\tau p_\mathrm{j}\beta_\mathrm{j}\mathbb{E}\{\delta_1\}\sigma^4}{\tau p_\mathrm{j} \mathbb{E}\{\delta_2\} \beta_\mathrm{j} + \sigma^2}\left(\frac{Mq_\mathrm{j}\beta_\mathrm{j}}{\tau p_\mathrm{j} \mathbb{E}\{\delta_2\} \beta_\mathrm{j} + \sigma^2}+1 \right)+\sigma^4=\frac{p_\mathrm{j}\beta_\mathrm{j}\sigma^4}{p_\mathrm{j} \beta_\mathrm{j} + \sigma^2}\left(\frac{Mq_\mathrm{j}\beta_\mathrm{j}}{p_\mathrm{j} \beta_\mathrm{j} + \sigma^2}+1 \right)+\sigma^4.
\end{align}
In \eqref{muap} the second equality follows from the assumption that $\mathbf{s}_\mathrm{j}\in\mathbb{C}^{\tau \times 1}$ is uniformly distributed over the unit sphere, which gives $\mathbb{E}\{\delta_1\}=\mathbb{E}\{|\mathbf{s}_\mathrm{j}^T\mathbf{s}_\mathrm{u}^*|^2\}=1/\tau$ and $\mathbb{E}\{\delta_2\}=\mathbb{E}\{|\mathbf{s}_\mathrm{j}^T\mathbf{s}_\mathrm{\overline{u}}^*|^2\}=1/\tau$ \cite{Jin06TIT}.

\begin{remark}
From the results in (\ref{ops:asy}) and (\ref{sops:rho_zf}), some interesting observations can be made.
\begin{itemize}
    \item[(i)] The sub-optimal power allocation in (\ref{sops:rho_zf}) is consistent with the asymptotic optimal power allocation in (\ref{ops:asy}). Indeed, after some simple mathematical manipulations, one can show that ($p_\mathrm{opt}^\mathrm{sub},q_\mathrm{opt}^\mathrm{sub}$) converge to ($p_\mathrm{opt}^\mathrm{asy},q_\mathrm{opt}^\mathrm{asy}$) when $M\to\infty$.

   \item[(ii)] Unlike the asymptotically optimal power allocation in (\ref{ops:asy}), the sub-optimal power allocation ($p_\mathrm{opt}^\mathrm{sub},q_\mathrm{opt}^\mathrm{sub}$) depends on the number of antennas $M$ and the jamming powers $p_\mathrm{j},q_\mathrm{j}$. Therefore, one can expect a better performance from the sub-optimal power allocation, especially when the number of antennas $M$ is large but finite.

  \item[(iii)] A better approximation of $\nu$ can be achieved by using the instantaneous estimates of $\delta_1$ and $\delta_2$ from (\ref{delapp}) instead of their mean values, as done in $\tilde{\nu}$. However, the estimates from (\ref{delapp}) are only available after the pilot phase. Therefore, we can not use them to facilitate a power allocation, which needs to be pre-determined before the pilot phase.
\end{itemize}
\end{remark}

\section{Numerical Analysis} \label{sec:numerical}

In this section, we numerically evaluate the performance of different linear receivers, including the proposed jamming-resistant receivers, in term of the average achievable rates. The average in (\ref{R}) is taken over 10000 realizations of $\mathbf{s}_\mathrm{j}$. We consider a coherence block of $T=200$ symbols, $\tau=3$, $\beta_\mathrm{u}=\beta_\mathrm{u}=1$, and $\sigma^2=1$. For comparison, we also include the rate achieved by the conventional MRC receiver $\mathbf{a}_{\mathrm{mrc}}=\mathbf{\widehat{h}}$, defined in \eqref{MRC-definition}, which does not use the estimate $\mathbf{\widehat{g}}$ \cite{PRB16WCL}.

\subsection{Performance with Different Linear Receivers}

\begin{figure}[!t]
\centering
\includegraphics[width=10cm]{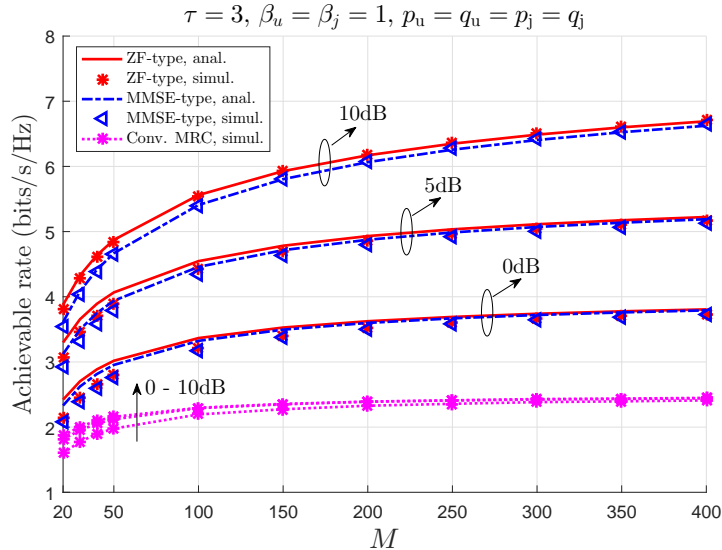}
\caption{Achievable rates for different linear receivers. The solid curves (with label ``anal.'') represent the analysis results. The marked curves (with label ``simul.'') represent the simulation results.}
\label{vsM}
\end{figure}

First, we compare the performance of the different linear receivers. Fig.~\ref{vsM} shows the achievable rates versus the number of antennas at the BS. We assume that $p_\mathrm{u} =q_\mathrm{u}=p_\mathrm{j} =q_\mathrm{j}=\mathrm{SNR}$ and consider different values of $\mathrm{SNR}= 0\,\mathrm{dB},~5\,\mathrm{dB},~10\,\mathrm{dB}$. As expected, the proposed receive filters based on the jamming channel estimate can remarkably improve the system performance, as compared to the MRC receiver. The achievable rates calculated based on the analysis in Theorem~\ref{Thrm:rho_rzf} (curves with ``anal.'') are close to the Monte-Carlo simulations (curves with ``simul.''), and will be asymptotically tight as $M\to\infty$. Moreover, we can see that the simple ZF-type receive filter works particularly well and outperforms the MMSE-type receive filter. This behavior confirms the results from Corollary~\ref{cor:zf_opt} and shows that in massive MIMO, when the jamming effect is critical, a favorable receiver solution is to focus on nulling the jamming signal.

\subsection{Impact of Jamming Powers}
Next, we exemplify the effect of the jamming powers on the system performance. We consider two scenarios with extremely strong jamming. In the first scenario (Fig.~\ref{vspjqj}), we fix the legitimate user's transmit powers as $p_\mathrm{u}=q_\mathrm{u}=5\, \mathrm{ dB}$, and increase both the jammer's transmit powers $p_\mathrm{j}$ and $q_\mathrm{j}$ from $-20\,\mathrm{dB}$ to $40\, \mathrm{dB}$. In the second scenario (Fig.~\ref{vspj}), we fix the legitimate user's transmit powers and the jamming data power  as $p_\mathrm{u}=q_\mathrm{u}=q_\mathrm{j}=5\, \mathrm{ dB}$, and increase only the jamming pilot power $p_\mathrm{j}$. 

\begin{figure}[!t]
\centering
\includegraphics[width=10cm]{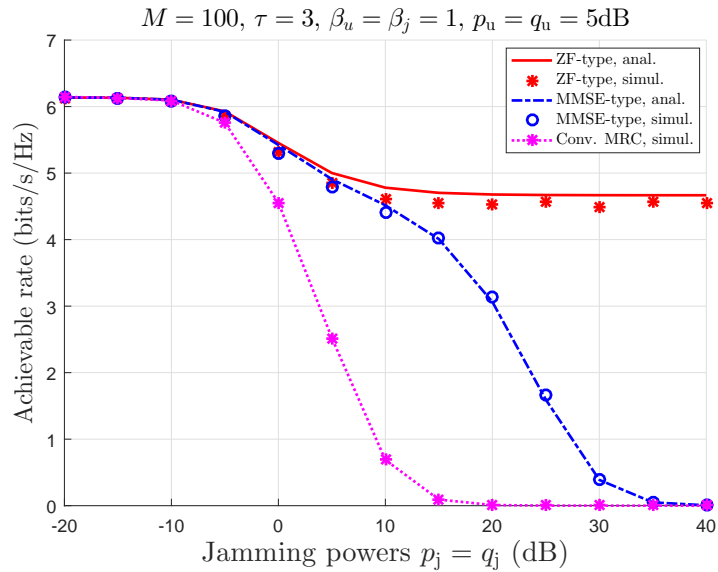}
\caption{Achievable rates for varying jamming powers $p_\mathrm{j}$, $q_\mathrm{j}$.}
\label{vspjqj}
\end{figure}

Fig.~\ref{vspjqj} illustrates the achievable rates in the first scenario, for different values of the jamming attack powers $p_\mathrm{j}=q_\mathrm{j}$. The achievable rates with the conventional MRC and MMSE-type receivers approach zero when $p_\mathrm{j}, q_\mathrm{j} \to \infty$. However, the ZF-type receiver still performs well under strong jamming attacks. Moreover, we can see that the achievable rate with the ZF-type receiver converges to a non-zero value when the jamming data power $q_\mathrm{j}$ tends to infinity, as long as the jamming pilot power $p_\mathrm{j}$ grows proportionally with $q_\mathrm{j}$.

\begin{figure}[!t]
\centering
\includegraphics[width=10cm]{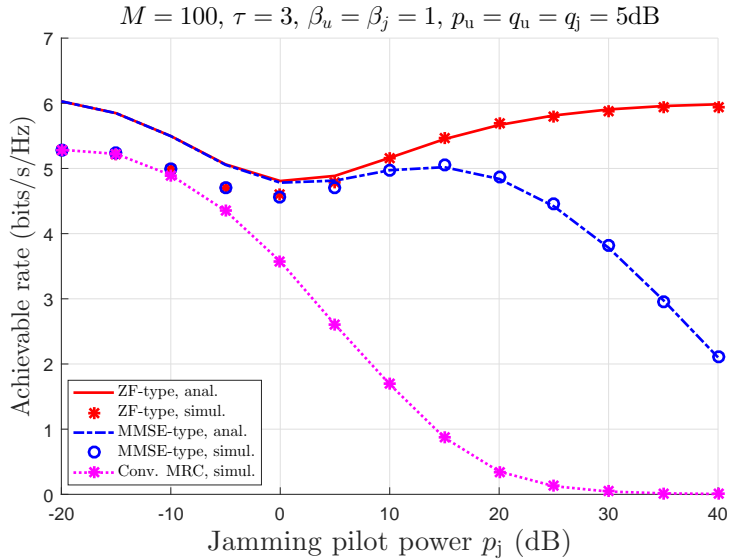}
\caption{Achievable rates for varying jamming pilot power $p_\mathrm{j}$.}
\label{vspj}
\end{figure}

Fig.~\ref{vspj} shows the achievable rates according to the jamming pilot power $p_\mathrm{j}$, in the second scenario with fixed jamming data power. As expected, the achievable rate with the conventional MRC receiver decreases with the increase of the jamming powers. However, the proposed scheme, especially with the ZF-type receiver, still works well with strong jamming attacks. Moreover, it is interesting to see that the ZF-type receiver works better under stronger jamming pilot attacks. We note that there are gaps between our approximate results and the simulation results when $p_\mathrm{j}$ is very small. This is because in our large-scale approximation, we omit some small terms, which a.s.~converge to zero when $M\to \infty$, but still are significant when $p_\mathrm{j}$ is small and $M$ is finite. The gaps will disappear when $M\to \infty$.

The results in Fig.~\ref{vspjqj} and Fig.~\ref{vspj} are consistent with our analysis in Section~\ref{subsec:str_jammer} and can be explained by the fact that the proposed receive filters are constructed using the estimates of both the legitimate channel $\mathbf{h}$ and the jamming channel $\mathbf{g}$. When the jamming pilot power $p_\mathrm{j}$ increases, it does not only degrade the quality of the legitimate channel estimation but also improves the estimation quality of the jamming channel. Thus, the proposed receive filters can improve the system performance if the improvement in the estimation quality of $\mathbf{\widehat{g}}$ overcomes the degradation in the estimation quality of $\mathbf{\widehat{h}}$.

\subsection{Impact of Power Allocation}

\begin{figure}[!t]
\centering
\includegraphics[width=10cm]{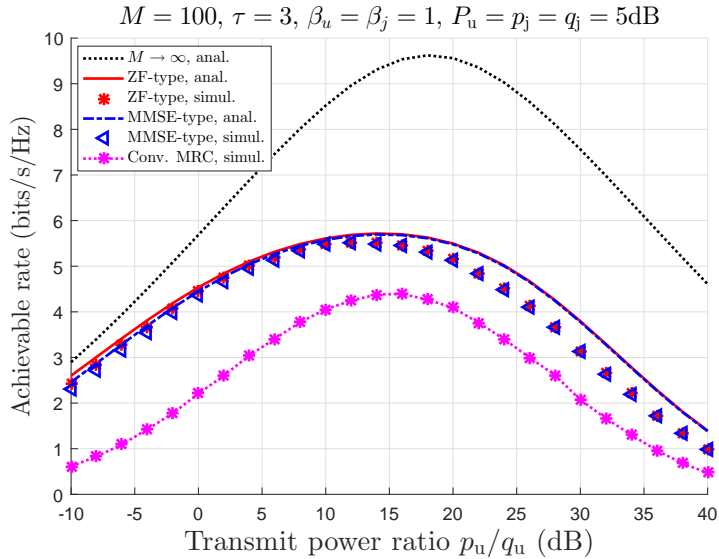}
\caption{Achievable rates for varying power ratios $p_\mathrm{u}/q_\mathrm{u}$. The dotted curve (with label ``$M\to\infty$'') represents the asymptotically achievable rate as $M\to\infty$. The other curves are the achievable rates for $M=100$.}
\label{vs_puqu}
\end{figure}

Next, we evaluate the impact of different power allocations on the system performance.
Fig.~\ref{vs_puqu} illustrates the achievable rates for varying power ratios $p_\mathrm{u}/q_\mathrm{u}$ of the legitimate user. We assume that the average power $P_\mathrm{u}$ and the jamming powers are fixed as $P_\mathrm{u}=p_\mathrm{j}=q_\mathrm{j}=5\, \mathrm{ dB}$. It is observed that the achievable rates have their peak values at certain ratios of $p_\mathrm{u}/q_\mathrm{u}$. In other words, one has to balance between the powers spent by the pilot and data signals to achieve the best transmission rates. Moreover, we can see that the offsets between the asymptotic optimal power ratio (corresponding to the peak value of the asymptotic achievable rate with $M\to\infty$) and the optimal power ratios for finite $M$ (corresponding to the peak values of the achievable rates for MMSE-type and ZF-type receivers with $M=100$) are relatively small. It is thus expected that the asymptotically optimal power allocation can be employed as a simple heuristic power allocation for systems with large but finite number of antennas $M$.

\begin{figure}[!t]
\centering
\includegraphics[width=10cm]{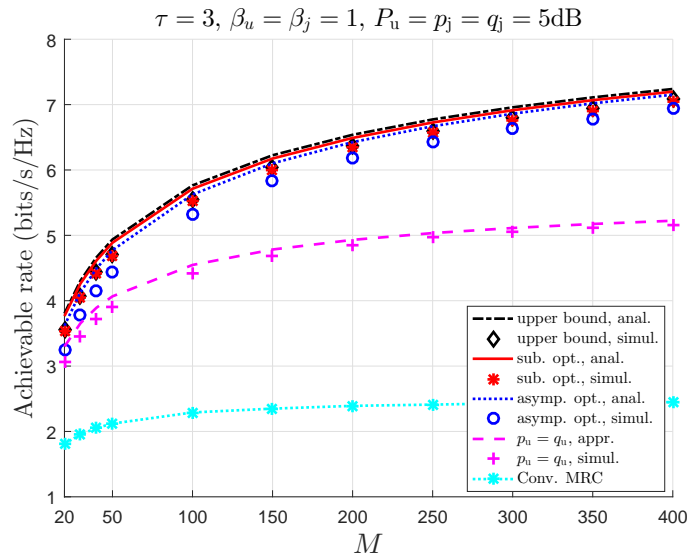}
\caption{Achievable rates with different power allocations.}
\label{vs_optP}
\end{figure}

In Fig.~\ref{vs_optP}, we plot the achievable rates using different power allocations. All the achievable rates are achieved by the ZF-type receiver, except the one with the conventional MRC receiver, which is included for the comparison. In this figure, the ``upper bound'' curves are the achievable rate upper bounds, which are obtained by the optimal power allocation in (\ref{ops:rho_zf}). We can see that the achievable rate increases remarkably when using the simple asymptotically optimal power allocation in (\ref{ops:asy}). Additionally, the achievable rate achieved by the proposed sub-optimal power allocation in (\ref{sops:rho_zf}) is very close to the achievable rate upper bound.

\subsection{Accuracies of the Estimates $\widehat{\delta}_1$ and $\widehat{\delta}_2$}

Lastly, we numerically evaluate the accuracies of the estimations for $\widehat{\delta}_1$ and $\widehat{\delta}_2$ in (\ref{delapp}). We consider $p_\mathrm{j} =q_\mathrm{j} = 5\,\mathrm{dB}$ and $p_\mathrm{u} =q_\mathrm{u} = 0\,\mathrm{dB},~5\,\mathrm{dB},~10\,\mathrm{dB}$. In the upper sub-figure of Fig.~\ref{delta_est}, we plot the  normalized mean square errors (NMSEs) of the estimations in (\ref{delapp}), i.e., $\frac{\mathbb{E}\{|\widehat{\delta}_i-\delta_i|^2\}}{\mathbb{E}\{|\delta_i|^2\}}$, $i=1,2$. In the lower sub-figure of Fig.~\ref{delta_est}, we show the achievable rates, which are calculated based on the exact values of $(\delta_1, \delta_2)$ (the curves with label ``anal.'') and based on the estimates $(\widehat{\delta}_1, \widehat{\delta}_2)$ (the curves with label ``est.'').

We can see that the NMSEs are very small and approach zero when $M$ tends to infinity. The estimation for $\delta_2$ is especially good and independent of $p_\mathrm{u}$ since $\widehat{\delta}_2$ is not impacted by the desired pilot signal. Fig.~\ref{delta_est} also shows that the achievable rates resulted from the estimations in (\ref{delapp}) match very well with the actual achievable rates. Therefore, it is expected that the system can realize the achievable rate in (\ref{R}) without the perfect knowledge of $\mathbf{s}_\mathrm{j}$.

\begin{figure}[!t]
\centering
\includegraphics[width=10.5cm]{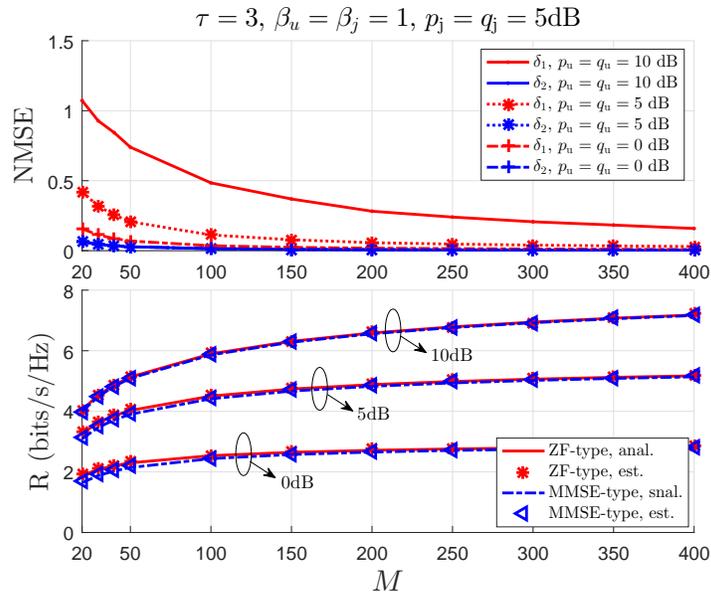}
\caption{Normalized mean square errors of the estimations in (\ref{delapp}) (in upper sub-figure) and the corresponding achievable rates comparison (in the lower sub-figure).}
\label{delta_est}
\end{figure}

\section{Conclusion} \label{sec:conclusion}
A new jamming-resistant receiver approach has been proposed to enhance the robustness of the massive MIMO uplink against jamming attacks. By exploiting purposely unused pilot sequences, the jamming channel can be estimated using the received pilot signal. The results show that the proposed receive filters, which were constructed using the jamming channel estimate, can greatly reduce the effect of jamming attacks and improve the system performance. Moreover, the proposed scheme still works well, or even better, when the jamming powers increase. Due to the critical effect of jamming signal, a ZF-type receive filter, which focuses on nulling the jamming signal is a favorable approach for massive MIMO with jamming. We have also shown that judicious power allocations can substantially improve the performance of the proposed receivers.

\appendices

\section{Proof of Corollary~\ref{cor:zf_opt}}\label{prof:zf_opt}
In order to prove Corollary~\ref{cor:zf_opt}, we need to show that
\begin{align}
\rho_\mathrm{ZF} \geq \rho_\mathrm{RZF}, \forall \mu \geq 0.
\end{align}

Let us consider
\begin{align}
\nonumber
M\tau p_\mathrm{u}q_\mathrm{u}\beta_\mathrm{u}^2\left(\frac{1}{\rho_\mathrm{RZF}}-\frac{1}{\rho_\mathrm{ZF}}\right)=
M\tau p_\mathrm{j}q_\mathrm{j}\delta_1\beta_\mathrm{j}^2\left(\left(\frac{\mu/M+\eta_\mathrm{j}^2\sigma^2}{\mu/M+\eta_\mathrm{j}^2\gamma_\mathrm{j}}\right)^2-\frac{\sigma^4}{\gamma_\mathrm{j}^2}\right)&\\
\label{mmse_zf_1}
+\tau p_\mathrm{j}\delta_1\beta_\mathrm{j}\sigma^2
\left(\frac{\tau p_\mathrm{j}\delta_2 \eta_\mathrm{j}^4\sigma^2\beta_\mathrm{j}+(\mu/M+\eta_\mathrm{j}^2\sigma^2)^2 }{(\mu/M+\eta_\mathrm{j}^2\gamma_\mathrm{j})^2}-\frac{\sigma^2}{\gamma_\mathrm{j}}\right)&.
\end{align}

Using the fact that $\gamma_\mathrm{j}=\tau p_\mathrm{j}\delta_2\beta_\mathrm{j}+\sigma^2$, we can rewrite (\ref{mmse_zf_1}) as
\begin{align}
\nonumber
Mq_\mathrm{u}\alpha_1^2\beta_\mathrm{u}^2\left(\frac{1}{\rho_\mathrm{RZF}}-\frac{1}{\rho_\mathrm{ZF}}\right)=
\tau p_\mathrm{j}q_\mathrm{j}\delta_1\beta_\mathrm{j}^2\mu \frac{\gamma_\mathrm{j}-\sigma^2}{\mu/M+\eta_\mathrm{j}^2\gamma_\mathrm{j}}\left(\frac{\mu/M+\eta_\mathrm{j}^2\sigma^2}{\mu/M+\eta_\mathrm{j}^2\gamma_\mathrm{j}}+\frac{\sigma^2}{\gamma_\mathrm{j}}\right)&\\
\label{mmse_zf_2}
+\tau p_\mathrm{j}\delta_1\beta_\mathrm{j}\sigma^2\frac{\gamma_\mathrm{j}-\sigma^2}{\gamma_\mathrm{j}}\left(\frac{\mu/M}{\mu/M+\eta_\mathrm{j}^2\gamma_\mathrm{j}}\right)^2&.
\end{align}
The right-hand side of (\ref{mmse_zf_2}) is non-negative since $\gamma_\mathrm{j}=\tau p_\mathrm{j}\delta_2\beta_\mathrm{j}+\sigma^2 \geq \sigma^2$. Therefore, we have $\rho_\mathrm{ZF} \geq \rho_\mathrm{RZF}$ for all $\mu\geq 0$.
Thus, the effective SINR $\rho_\mathrm{RZF}$ approaches its maximum when $\mu \to 0$.

\section{Proof of Corollary~\ref{cor:str_jammer}}\label{prof:str_jammer}

The proof for Corollary~\ref{cor:str_jammer} consists of two parts.

\subsection{$\rho_\mathrm{RZF}$ converges to $0$ when $\mu>0$ and $p_\mathrm{j}=\lambda q_\mathrm{j} \to\infty$:}

Let us consider the second term in the denominator of (\ref{rho_rzf}). Since $\gamma_\mathrm{j}=\tau p_\mathrm{j} \delta_2 \beta_\mathrm{j} + \sigma^2$ and
\begin{align}
\nonumber
\eta_{\mathrm{j}}=\frac{1}{{\sqrt{p_\mathrm{j}\beta_\mathrm{j}+\sigma^2}}},
\end{align}
we have
\begin{align}
\nonumber
\left(\frac{\mu/M+\eta_\mathrm{j}^2\sigma^2}{\mu/M+\eta_\mathrm{j}^2\gamma_\mathrm{j}}\right)^2 \to
\frac{\mu^2}{(\mu^2+M\tau\delta_2\beta_\mathrm{j})^2},~~ \mathrm{for} ~~ p_\mathrm{j}=\lambda q_\mathrm{j} \to\infty
\end{align}
and
\begin{align}
\nonumber
M\tau p_\mathrm{j}q_\mathrm{j}\delta_1\beta_\mathrm{j}^2 \left(\frac{\mu/M+\eta_\mathrm{j}^2\sigma^2}{\mu/M+\eta_\mathrm{j}^2\gamma_\mathrm{j}}\right)^2\to
\infty, ~~\mathrm{for} ~~ p_\mathrm{j}=\lambda q_\mathrm{j} \to\infty.
\end{align}

Therefore,
\begin{align}
\nonumber
\rho_\mathrm{RZF} \to 0, ~~ \mathrm{for}~~\mu>0 ~~ \mathrm{and}~~ p_\mathrm{j}=\lambda q_\mathrm{j} \to\infty.
\end{align}

\subsection{$\rho_\mathrm{RZF}$ converges to a non-zero finite value when $\mu=0$ and $p_\mathrm{j}=\lambda q_\mathrm{j} \to\infty$:}

When $\mu=0$, we have $\rho_\mathrm{RZF}=\rho_\mathrm{ZF}$.
Let us consider the second and last terms in the denominator of (\ref{rho_zf}), which are dependent on $p_\mathrm{j}$ and $q_\mathrm{j}$. We have
\begin{align}
\nonumber
M\tau p_\mathrm{j}q_\mathrm{j}\delta_1\beta_\mathrm{j}^2\frac{\sigma^4}{\gamma_\mathrm{j}^2}&=
M\tau p_\mathrm{j}q_\mathrm{j}\delta_1\beta_\mathrm{j}^2\frac{\sigma^4}{(\tau p_\mathrm{j} \delta_2 \beta_\mathrm{j} + \sigma^2)^2}\\
\nonumber
&\to \frac{M\delta_1\sigma^4}{\lambda\tau\delta_2^2},~~\mathrm{for} ~~ p_\mathrm{j}=\lambda q_\mathrm{j} \to\infty
\end{align}
and
\begin{align}
\nonumber
\tau p_\mathrm{j}\delta_1\beta_\mathrm{j}\frac{\sigma^2}{\gamma_\mathrm{j}}&=
\tau p_\mathrm{j}\delta_1\beta_\mathrm{j}\frac{\sigma^2}{\tau p_\mathrm{j} \delta_2 \beta_\mathrm{j} + \sigma^2}\\
\nonumber
&\to \frac{\delta_1}{\delta_2}\sigma^2,~~\mathrm{for} ~~ p_\mathrm{j}=\lambda q_\mathrm{j} \to\infty.
\end{align}

Therefore, we have
\begin{align}
\nonumber
\rho_\mathrm{ZF} \to \frac{M\tau p_\mathrm{u}q_\mathrm{u}\beta_\mathrm{u}^2}{\tau p_\mathrm{u}q_\mathrm{u}\beta_\mathrm{u}^2+\frac{M\delta_1\sigma^4}{\lambda\tau\delta_2^2}+\sigma^2(\tau  p_\mathrm{u}\beta_\mathrm{u} +\sigma^2+\frac{\delta_1}{\delta_2}\sigma^2)}>0, ~~ \mathrm{for} ~~ p_\mathrm{j}=\lambda q_\mathrm{j} \to\infty.
\end{align}

\bibliographystyle{IEEEtran}
\bibliography{IEEEabrv,Treferences}

\begin{thebibliography}{10}
\providecommand{\url}[1]{#1}
\csname url@samestyle\endcsname
\providecommand{\newblock}{\relax}
\providecommand{\bibinfo}[2]{#2}
\providecommand{\BIBentrySTDinterwordspacing}{\spaceskip=0pt\relax}
\providecommand{\BIBentryALTinterwordstretchfactor}{4}
\providecommand{\BIBentryALTinterwordspacing}{\spaceskip=\fontdimen2\font plus
\BIBentryALTinterwordstretchfactor\fontdimen3\font minus
  \fontdimen4\font\relax}
\providecommand{\BIBforeignlanguage}[2]{{%
\expandafter\ifx\csname l@#1\endcsname\relax
\typeout{** WARNING: IEEEtran.bst: No hyphenation pattern has been}%
\typeout{** loaded for the language `#1'. Using the pattern for}%
\typeout{** the default language instead.}%
\else
\language=\csname l@#1\endcsname
\fi
#2}}
\providecommand{\BIBdecl}{\relax}
\BIBdecl

\bibitem{DBL17ICASSP}
T.~T. Do, E.~Bj{\"{o}}rnson, and E.~G. Larsson, ``Jamming resistant receivers
  for massive {MIMO},'' in \emph{Proc. 42nd IEEE Int. Conf. Acoust., Speech,
  Signal Process. (ICASSP2017)}, Mar. 2017, available:
  arxiv.org/abs/1702.04703.

\bibitem{MLYNb}
T.~L. Marzetta, E.~G. Larsson, H.~Yang, and H.~Q. Ngo, \emph{Fundamentals of
  Massive {MIMO}}.\hskip 1em plus 0.5em minus 0.4em\relax Cambridge, UK:
  Cambridge University Press, 2016.

\bibitem{RZR15CM}
D.~Kapetanovic, G.~Zheng, and F.~Rusek, ``Physical layer security for massive
  {MIMO}: An overview on passive eavesdropping and active attacks,'' \emph{IEEE
  Commun. Maga.}, vol.~53, no.~6, pp. 21--27, Jun. 2015.

\bibitem{BKA15CNS}
Y.~O. Basciftci, C.~E. Koksal, and A.~Ashikhmin, ``Securing massive {MIMO} at
  the physical layer,'' in \emph{IEEE Conf. on Commun. and Net. Sec. (CNS)
  2015}, Philadelphia, PA, USA, Sep. 2015, pp. 272--280.

\bibitem{WoS02Comp}
A.~D. Wood and J.~A. Stankovic, ``Denial of service in sensor networks,''
  \emph{{IEEE} Computer}, vol.~35, no.~10, pp. 54--62, Oct. 2002.

\bibitem{YZJ16TIFS}
Q.~Yan, H.~Zeng, T.~Jiang, M.~Li, W.~Lou, and Y.~T. Hou, ``Jamming resilient
  communication using {MIMO} interference cancellation,'' \emph{{IEEE} Trans.
  Inf. Forensics Security}, vol.~11, no.~7, pp. 1486--1499, Feb. 2016.

\bibitem{IJC13GBC}
S.~Im, H.~Jeon, J.~Choi, and J.~Ha, ``Secret key agreement under an active
  attack in mu-tdd systems with large antenna arrays,'' in \emph{Proc. 2013
  IEEE Global Commun. Confer. ({GLOBECOM})}, Dec. 2013, pp. 1849--1855.

\bibitem{WLW15TWC}
J.~Wang, J.~Lee, F.~Wang, and T.~Q.~S. Quek, ``Jamming-aided secure
  communication in massive {MIMO} {Rician} channels,'' \emph{IEEE Trans. on
  Wireless Commun.}, vol.~14, no.~12, pp. 6854--6868, Dec. 2015.

\bibitem{PRB16WCL}
H.~Pirzadeh, S.~M. Razavizadeh, and E.~Bj{\"{o}}rnson, ``Subverting massive
  {MIMO} by smart jamming,'' \emph{IEEE Wireless Commun. Letters}, vol.~5,
  no.~1, pp. 20--23, Feb. 2016.

\bibitem{WSN16TIT}
Y.~Wu, R.~Schober, D.~W.~K. Ng, C.~Xiao, and G.~Caire, ``Secure massive {MIMO}
  transmission with an active eavesdropper,'' \emph{{IEEE} Trans. Inf. Theory},
  vol.~62, no.~7, pp. 3880--3900, Jul. 2016.

\bibitem{Mar10TWC}
T.~L. Marzetta, ``Noncooperative cellular wireless with unlimited numbers of
  base station antennas,'' \emph{IEEE Trans. on Wireless Commun.}, vol.~9,
  no.~11, pp. 3590--3600, Nov. 2010.

\bibitem{Huh2012a}
H.~Huh, G.~Caire, H.~Papadopoulos, and S.~Ramprashad, ``Achieving ``massive
  {MIMO}'' spectral efficiency with a not-so-large number of antennas,''
  \emph{{IEEE} Trans. Wireless Commun.}, vol.~11, no.~9, pp. 3226--3239, Sep.
  2012.

\bibitem{BLD16TWC}
E.~Bj{\"{o}}rnson, E.~G. Larsson, and M.~Debbah, ``Massive {MIMO} for maximal
  spectral efficiency: How many users and pilots should be allocated?''
  \emph{IEEE Trans. on Wireless Commun.}, vol.~15, no.~2, pp. 1293--1308, Feb.
  2016.

\bibitem{Yin2013a}
H.~Yin, D.~Gesbert, M.~Filippou, and Y.~Liu, ``A coordinated approach to
  channel estimation in large-scale multiple-antenna systems,'' \emph{{IEEE} J.
  Sel. Areas Commun.}, vol.~31, no.~2, pp. 264--273, Feb. 2013.

\bibitem{Bjornson2017a}
E.~Bj{\"{o}}rnson, J.~Hoydis, and L.~Sanguinetti, ``Pilot contamination is not
  a fundamental asymptotic limitation in massive {MIMO},'' in \emph{Proc. IEEE
  ICC}, 2017, submitted. Available: arxiv.org/abs/1611.09152.

\bibitem{Kay93}
S.~M. Kay, \emph{Fundamentals of Statistical Signal Processing: Estimation
  Theory}.\hskip 1em plus 0.5em minus 0.4em\relax NJ, USA: Prentice-Hall, 1993.

\bibitem{Rap96WCP}
T.~S. Rappaport, \emph{Wireless Communications: Principles and Practice},
  1st~ed.\hskip 1em plus 0.5em minus 0.4em\relax Piscataway, NJ, USA: IEEE
  Press., 1996.

\bibitem{AML14TIT}
A.~Ashikhmin, T.~L. Marzetta, and L.~Li, ``Interference reduction in multi-cell
  massive {MIMO} systems i: Large-scale fading precoding and decoding,''
  \emph{{IEEE} Trans. Inf. Theory}, Nov. 2014, submitted. Available:
  arxiv.org/abs/1411.4182.

\bibitem{PHS05TCOM}
C.~B. Peel, B.~M. Hochwald, and A.~L. Swindlehurst, ``A vector-perturbation
  technique for near-capacity multiantenna multiuser communication - part {I}:
  channel inversion and regularization,'' \emph{{IEEE} Trans. Commun.},
  vol.~53, no.~1, pp. 195--202, Jan. 2005.

\bibitem{HBD13JSAC}
J.~Hoydis, S.~ten Brink, and M.~Debbah, ``Massive {MIMO} in the {UL/DL} of
  cellular networks: How many antennas do we need?'' \emph{{IEEE} J. Sel. Areas
  Commun.}, vol.~31, no.~2, pp. 160--171, Feb. 2013.

\bibitem{cvxb}
S.~Boyd and L.~Vandenberghe, \emph{Convex Optimization}.\hskip 1em plus 0.5em
  minus 0.4em\relax Cambridge Univ. Press., 2004.

\bibitem{Jin06TIT}
N.~Jindal, ``{MIMO} broadcast channels with finite-rate feedback,''
  \emph{{IEEE} Trans. Inf. Theory}, vol.~52, no.~11, pp. 5045--5060, Nov. 2006.

\end{thebibliography}

\end{document}